 \documentclass[final]{l4dc2026}

\usepackage[english]{babel}
\usepackage{wrapfig, ragged2e}
\usepackage{lmodern}
\usepackage[scaled]{}
\usepackage[utf8]{inputenc}

\usepackage{soul}
\usepackage{siunitx}

\usepackage{soul}
\usepackage{siunitx}

\usepackage{verbatim}
\usepackage[mathscr]{euscript}
 \let\mathscr\relax
\usepackage[scr]{rsfso}
\usepackage{enumitem}

\newcommand\numberthis{\addtocounter{equation}{1}\tag{\theequation}}

\newtheorem{problem}{Problem}

\usepackage[many]{tcolorbox}   

\usepackage{multicol}               

\definecolor{main}{HTML}{000000}    
\definecolor{sub}{HTML}{e1e9f1}     

\tcbset{
    sharp corners,
    colback = white,
    before skip = 0.2cm,    
    after skip = 0.5cm      
}                           

\newtcolorbox{boxL}{
    fontupper = \color{main},
    rounded corners,
    arc = 6pt,
    colback = sub, 
    colframe = main!50, 
    boxrule = 0pt, 
    bottomrule = 4.5pt 
}

\usepackage[mathscr]{euscript}
 \let\mathscr\relax
\usepackage[scr]{rsfso}

\usepackage[utf8]{inputenc}
\usepackage{pgfplots}

 \usepackage{booktabs} 
 \usepackage{multirow}
\usepackage{float}
\usepackage{subcaption}

\newtheorem{thm}{Theorem}

\newcommand{\RM}[1]{\textcolor{red}{\textbf{RM:} #1}}

\def\tuple#1{{\langle #1 \rangle}}

\title[Kernel-Based Safe Exploration in Deep Reinforcement Learning]{Kernel-Based Safe Exploration in Deep Reinforcement Learning}
\usepackage{times}

\coltauthor{\Name{Rupak Majumdar} \Email{rupak@mpi-sws.org}\\
 \Name{Nikhil Singh} \Email{niksingh@mpi-sws.org}\\
 \Name{Sadegh Soudjani} \Email{sadegh@mpi-sws.org}\\
 \addr Max Planck Institute for Software Systems, Germany}

\begin{document}

\maketitle

\begin{abstract}
Safety has been a major concern when deploying deep reinforcement learning algorithms in the real world. 
A promising direction that ensures that the learned policy does not visit unsafe regions is to learn a \emph{barrier function} along with the
policy.
A barrier is a function from states to reals that assigns low values
to the initial states, high values to the unsafe states, and decreases in expectation on each transition; such a function can be used to bound the probability of reaching unsafe states.
Previous attempts learned a barrier function directly from exploration data,
but this required either large amounts of data or restrictions on the system dynamics.
In this paper, we show how kernel embeddings can be used to learn
barrier functions during deep reinforcement learning for stochastic systems with unknown dynamics.
Our algorithm, \emph{kernel-based safe exploration (KBSE)}, learns an optimal policy and a barrier simultaneously during exploration.
The barriers are computed iteratively, represented as conditional mean embeddings, and
provide better probabilistic safety guarantees with more exploration.
The exploration algorithm uses the learned barrier functions to identify safety violations. In the case of violation, it intervenes to modify the unsafe action to a safe action, thereby ensuring that the exploration is restricted to actions that bound the probability of reaching unsafe states. 
We evaluate KBSE on several complex continuous control benchmarks. 
Experimental results establish our new algorithm to be suitable for synthesizing control policies that are probabilistically safe without degradation in reward accumulation.
\end{abstract}

\section{Introduction}
\label{sec:intro}

In recent years, many challenging problems in optimal policy synthesis,
including solving Atari~\citep{Mnih15},
Go~\citep{Silver17}, biped walking~\citep{Levine16} and Language Models~\citep{naveed24},
have been solved using reinforcement learning (RL).
This is attributed to RL's ability to maximize cumulative rewards through online exploration, 
without requiring a model of the underlying dynamics.

For safety-critical, high-dimensional, control problems, an obstacle to deploying RL-based controllers is the possibility
that a reward-maximizing controller may lead the system into unsafe states. 
Thus, a challenging research direction is to learn an optimal policy that is constrained by safety requirements.
While there already exists a rich literature on this problem, existing solutions are still unsatisfactory:
either because they do not provide certificates of correctness, or because they require data-intensive techniques, or
because the learning process is subject to oscillations or high approximation errors.

In this paper, we propose an online, kernel-based approach to learning safe control policies under unknown dynamics.
Our algorithm, \emph{kernel-based safe exploration (KBSE)}, learns an optimal policy and a barrier function simultaneously.
The policy optimizes reward accumulation while being constrained by a barrier function.
The barrier function \citep{Prajna} provides a certificate that the system violates the safety specification with a bounded probability with high confidence \citep{schon24}.
In particular, the exploration is constrained to use actions that do not violate the safety specification.
Simultaneously, the exploration is used to update the barrier function to provide better safety guarantees over time.

Intuitively, a barrier function maps states to reals such that the function (a) assigns a low value to initial states, 
(b) assigns a high value to unsafe states,
and (c) reduces in expectation on each step of the dynamics---the existence of such a function implies an upper bound on the probability that
an unsafe state may be reached.
The technical challenge in learning barrier functions is the conditional expectation in requirement (c).
The key to our technique is the representation of barrier functions 
using conditional mean embeddings (CMEs) \citep{Song2009CondEmbed,Klebanov2020RigorousCME} in reproducing kernel Hilbert spaces (RKHS) \citep{scholkopf2002learning,Berlinet2004RKHSProbStat,Steinwart2008SVM}.
An RKHS is a Hilbert space of functions characterized by the property that pointwise evaluation is a continuous linear functional. 
This property renders RKHS a powerful framework for studying and manipulating functions in high-dimensional spaces. 
A CME embeds conditional probability distributions into an RKHS, allowing for the computation of conditional expectations via simple inner products. 
In particular, using CMEs, we can synthesize barrier functions using a simple linear optimization problem.

We evaluate KBSE on several challenging continuous control benchmarks available in the Gym~\citep{gym} classical control and Mujoco-based environments. 
For each of these benchmarks, we consider safety specifications with requirements more stringent than those used in Gym environments. 
Our approach generalizes the safety requirements by allowing the violation of safety constraints with a certain probability threshold. This enables us to study systems for which no policy exists to satisfy safety almost surely.
We compare the performance of the control policy synthesized by KBSE against the most popular off-policy safe RL algorithms.
Experimental results show that the KBSE algorithm is superior to the baseline algorithms in terms of reward accumulation and safety cost. In addition, KBSE also provides the safety probability for the learned control policies which is not possible in the case of baseline algorithms. We have relegated the proofs of statements and additional details to the supplementary material.





\section{Related Work}

\noindent
\textbf{Constrained RL.}
Learning policies subject to constraints has been studied widely using a range of techniques including constrained policy optimization~\citep{Achiam17}, Lagrangian multipliers~\citep{Ray2019,stooke20}, and penalty-based methods \citep{Liu19}. While these approaches are subject to inherent oscillations, initial value dependency, large approximation errors, or large computational costs, they are not able to encode safety probability thresholds in their policy learning. Our approach integrates directly the bounds on safety probabilities in the learning through the concept of barrier functions.   

%

\smallskip
\noindent
\textbf{Barrier functions in RL.} 
Recent strategies in safe RL use various types of safety certificates to  ensure constraint satisfaction during exploration.
The work by \citet{li19,Yang23b} uses barrier and Lyapunov functions combined with RL.
\citet{Cheng19} propose RL-CBF, providing safety guarantees with high probability assuming deterministic dynamics (known nominal dynamics and partially unknown dynamics).
The approach by \citet{Wang23} uses barrier functions for safe RL in stochastic dynamics using generative models.
These approaches are subject to handling deterministic systems, being limited to low-dimensional systems, or requiring large amounts of data to learn parameters of deep neural networks (DNNs) representing the barrier certificates and shields.
In contrast, our approach studies unknown stochastic systems using non-parametric kernel-based methods \citep{hofmann08,Meanti20}, which makes it applicable to large-dimensional systems.




\smallskip
\noindent
\textbf{Kernel methods in RL.} 
Kernel methods have been explored for policy synthesis \citep{Ormoneit02,Barreto16}.
The work by \citet{Romao23} uses kernel mean embedding in value iteration to find the optimal policy, thus being applicable to small-dimensional benchmarks.
The work by \citet{Domingues21} provides a model-based algorithm and uses kernel smoothing to estimate rewards and transitions and obtains a regret bound for kernel-based RL.
\citet{Thorpe21,Thorpe23} use conditional distribution embeddings to
represent a model-free controller synthesis problem as a linear program in RKHS. 
However, this approach does not take into account safety constraints, which we address here.

\smallskip
\noindent
\textbf{Novelty of our formulation.}
Our problem formulation is distinct from the standard safe RL approaches in the literature. In safe RL with shielding~\citep{Alshiekh18,reed24}, the safety requirement is specified as the level set of a function, and it disables actions that lead the system to an unsafe state. This means that the designed policy must satisfy the safety constraint almost surely (with probability 1). In contrast, our approach generalizes the safety requirements by allowing the violation of safety constraints with a certain probability threshold. 
In our problem formulation, we  intentionally permit a certain level of safety violation due to the constraints in the definition of the barrier function. Our aim is to learn a control policy that gives safety with a certain probability without degradation in performance.

Our approach is also distinct from the available constrained RL techniques \citep{Achiam17,Liu19,stooke20} as they consider constraints in the form of a bound on an additive or average objective. The probabilistic safety constraint in our formulation cannot be written in the form of additive or average objectives. Thus, these approaches are not applicable to our problem formulation.

\section{Preliminaries}

\textbf{Notation.} 
The set of reals, non-negative reals, and non-negative integers are denoted respectively by $\mathbb R$, $\mathbb R_{\ge 0}$ and $\mathbb N$.
Consider a Polish sample space $\mathbb{X}$ with underlying probability space $(\mathbb{X},\mathcal{B}(\mathbb{X}),\mathbb{P})$ endowed with a Borel $\sigma$-algebra $\mathcal{B}(\mathbb{X})$ and a probability measure $\mathbb{P}$.
We denote a random variable by $X$ and the instantiations are denoted by $x$. For a random variable $X$, let $p_{X}$ be pushforward probability measure of $\mathbb{P}$ under $X$ such that $X \sim p_{X}(\cdot)$. For a function $f(X)$, the expected value on $\mathbb{X}$ is given by $\mathbb{E}_{p_{X}}[f(X)]$.
For a measurable space $(\mathbb{X},\mathcal{B}(\mathbb{X}))$, the set of all probability measures is denoted by $\mathcal{P}(\mathbb{X})$.
For two measurable spaces $(\mathbb{X},\mathcal{B}(\mathbb{X}))$, $(\mathbb{Y},\mathcal{B}(\mathbb{Y}))$, a probability kernel is defined as the mapping $p : \mathbb{X} \times \mathcal{B}(\mathbb{Y}) \rightarrow [0,1]$  
such that $p(x,\cdot) : \mathcal{B}(\mathbb{Y}) \rightarrow [0,1]$ is a probability measure for all $x \in \mathbb{X}$ and $p(\cdot,B) : \mathbb{X} \rightarrow [0,1]$ 
is measurable for all $B \in \mathcal{B}(\mathbb{Y})$.
For each $x\in \mathbb{X}$, the conditional probability measure $p(x,\cdot)$ is also denoted by $p(\cdot | x)$.




\smallskip \noindent \textbf{Constrained Reinforcement Learning.}
%
We represent the environment $\mathcal{M}$ as a Markov Decision Process (MDP) $\mathcal{M} = \langle S,A,\mathcal{T},R\rangle$,
where $S \subset \mathbb{R}^p$ denotes the set of continuous states of the system and $A \subset \mathbb{R}^q$ denotes the set of continuous control actions. 
The function $\mathcal{T}: S\times A \times \mathcal{B}(S)\rightarrow[0,1]$ is a probability kernel, where $\mathcal{T}(s,a,\cdot)$ is the probability measure for transitioning to the next state from the current state $s\in S$ under action $a\in A$. The probability kernel $\mathcal T$ is assumed to be unknown.  
The function $R : S \rightarrow \mathbb{R}$ represents the reward function for $\mathcal{M}$.  
A transition (or sample) at time $t\in\mathbb N$ is denoted by $\langle s_t,a_t,r_t,s_{t+1} \rangle$, where $s_t\in S$ is a state, $a_t\in A$ is an action, $r_t \in \mathbb{R}$, such that $ r_t = R(s_t)$, is the reward obtained by the agent by performing action $a_t$ at state $s_t$, and $s_{t+1} \in S$ is the next state selected randomly according to the probability measure $\mathcal{T}(\cdot| s_t,a_t)$.

A Constrained Markov Decision Process (CMDP) is a tuple $\langle S, A, \mathcal{T}, R, \varphi\rangle$, which
is an MDP $\langle S, A, \mathcal{T}, R\rangle$ augmented with a \emph{safety specification} $\varphi = \tuple{S_u, T}$, in which $S_u\subset S$ is the set of unsafe states and $T\in\mathbb N$ is a time horizon.

A parameterized \emph{policy} is a probability kernel $\pi:S\times\Theta\times \mathcal B(A)\rightarrow[0,1]$, where $\Theta$ is the set of parameters, and for each $\theta\in\Theta$, $\pi_{\theta}(\cdot|s) = \pi(s,\theta,\cdot)$ is a conditional probability measure for selecting the action $a$ at state $s$,
given parameters $\theta$.
A probability measure $\rho_0:\mathcal B(S)\rightarrow[0,1]$ for the initial state $s_0$ and a policy $\pi_{\theta}$ together determines a probability measure $\chi$ over
trajectories $\omega=\langle s_0,a_0,s_1,\ldots,s_T \rangle$, given by
$\chi(\rho_0, \theta)(\omega) = \rho_0(ds_0)\prod_{t=0}^{T} \pi_{\theta}(da_t|s_t)\cdot \mathcal{T}(ds_{t+1}|s_t,a_t)$.

We denote the probability of event with respect to this measure by $\mathbb P_{\rho_0}^\theta[\cdot]$ and expectations of random variables by $\mathbb E_{\rho_0}^\theta[\cdot]$. When the initial state is fixed to a single state $s_0$, we replace $\rho_0$ with $s_0$. 

The safe reinforcement learning problem asks to compute an
optimal policy $\pi_{\theta^\ast}$ that maximizes the expected
discounted sum of rewards up to horizon $T$
while ensuring that the probability that the MDP reaches an unsafe state
within the horizon is at most a given bound $\delta$.
Formally,


%

\begin{boxL}
\begin{problem}[Constrained Policy Synthesis] 
\label{prob-def}
Given a CMDP $\tuple{S, A, \mathcal{T}, R, \tuple{S_u, T}}$ with unknown $\mathcal{T}$, initial state $s_0\in S$, parametrized policy $\pi_\theta$, discount factor $\gamma \in [0, 1)$, and a parameter $\delta$,
synthesize an optimal policy $\pi_{\theta^*}$ that maximizes the 
expected discounted sum of rewards while ensuring that the probability
of reaching $S_u$ is bounded above by $\delta$.
Mathematically, find an optimal policy $\pi_{\theta^*}$ such that
\begin{align*}
    \theta^* = \underset{{\theta\in\Theta}}{\arg\max} \  
    {\mathbb{E}}_{s_0}^\theta \ \left[\sum_{t=0}^{T} \gamma^t R(s_t)\right]
    \quad\text{s.t.}~~~
    \mathbb{P}_{s_0}^\theta
    \{\mbox{some state in $\omega$ is in } S_u\}\leq \delta. 
    \numberthis
    \label{prob}
\end{align*}
\end{problem}
\end{boxL}

The use of probability threshold enables us to generalize the safety requirements by allowing the violation of safety constraints up to a threshold and improve the optimal objective. This also enables us to study systems for which no policy exists to satisfy safety almost surely.
Solving Problem~\ref{prob-def} requires handling two challenges: (a) ensuring the probabilistic safety constraint, and (b) finding the policy under the unknown probability kernel $\mathcal T$. 
We utilize the concepts of barrier functions and Conditional Mean Embedding (CME) to tackle these challenges, as described below after giving a motivating example. 

\begin{wrapfigure}{r}{0.24\textwidth}
    \centering
    \includegraphics[height=1.6cm]{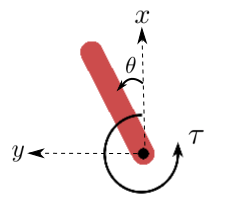}
    \caption{Pendulum}
    \label{fig:pend}
\end{wrapfigure}

\smallskip
\noindent
\textbf{A motivating example.}
\label{sec-pend}
We motivate our technique using the classical example of controlling an inverted pendulum~\citep{gym} to maintain its upright position.

A pendulum freely hangs in the downward position, and the default goal is to balance it vertically upward.
Normally, the control policy learns to balance upright by swinging the pendulum by one full round to gain sufficient momentum to bring it to the vertical position. However, if the starting configuration of the pendulum is in upward direction (above horizontal), a full swing might not be needed, and it can balance it upright with less effort. 
In this case, our safety specification requires that the pendulum should not go for a full swing.
Given that the starting state of the pendulum is feasible (sufficiently above horizontal position), our goal is to learn a control policy that balances the pendulum in the vertical upward position without going for a full swing.




\smallskip
\noindent
\textbf{Barrier functions.}
%
The notion of 
Barrier function provides a technique to define locally checkable conditions that guarantee satisfaction of the safety requirements in CMDPs. 
Fix a CMDP $\tuple{S, A, \mathcal{T}, R, \tuple{S_u, T}}$,
a policy $\pi_\theta$, and an initial
state $s_0$.
A function $B : S \rightarrow \mathbb{R}_{\geq 0}$ is a 
\emph{barrier function} w.r.t. the unsafe set $S_u$ if it satisfies the following conditions for some constants $\nu>\eta\geq 0$ and $c\geq 0$:

    \textbf{(i)} $B(s_0)\leq \eta$;~~
    \textbf{(ii)} $\forall s\in S_u:\,\,  B(s)\geq \nu$;~~ and
    \textbf{(iii)} $\forall s \in S: ~~\mathbb{E}_{s}^{\theta} [B(s^+) | s] - B(s)\leq c $;
    
where $s^+$ is the next state of the CMDP from the current state $s$ for the policy $\pi_\theta$.
\begin{thm}[\cite{Kushner}]
\label{thm:safe-prob}
If there is a non-negative barrier function $B : S \rightarrow \mathbb{R}_{\geq 0}$ for the CMDP $\tuple{S, A, \mathcal{T}, R, \tuple{S_u, T}}$, then
$
\mathbb{P}_{s_0}^{\theta}[
\mbox{some state in $\omega$ is in $S_u$}] \leq \frac{\eta + cT}{\nu}.
$
\end{thm}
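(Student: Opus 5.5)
The plan is the standard supermartingale argument (Kushner's inequality), applied to a stopped version of the barrier process.

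\textbf{Step 1: the process.} Let $s_0,s_1,\ldots,s_T$ be the state trajectory under $\pi_\theta$ from $s_0$, and let $\mathcal F_t$ be the $\sigma$-algebra generated by $s_0,a_0,\ldots,s_t$. Let $\tau=\min\{t\le T: s_t\in S_u\}$, with $\tau=T$ if no such $t$ exists. Consider the stopped process $B(s_{t\wedge\tau})$.

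\textbf{Step 2: drift of the shifted process.} I claim $M_t := B(s_{t\wedge\tau}) - c\,(t\wedge\tau)$ is a supermartingale on $t=0,\ldots,T$.
\begin{itemize}
\item On $\{\tau\le t\}$ the increment $M_{t+1}-M_t$ is zero.
\item On $\{\tau>t\}$, which is $\mathcal F_t$-measurable, the Markov property gives $\mathbb E[B(s_{t+1})\mid\mathcal F_t]=\mathbb E^\theta_{s_t}[B(s^+)\mid s_t]$. Condition (iii) then yields $\mathbb E[B(s_{t+1})\mid\mathcal F_t]\le B(s_t)+c$.
\end{itemize}
Hence $\mathbb E[M_{t+1}\mid \mathcal F_t]\le M_t$. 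Integrability needs a small remark. Since $B\ge 0$, each $\mathbb E[B(s_t)]\le B(s_0)+ct$ is finite by induction using (iii) and (i). So conditional expectations are well defined, possibly after allowing nonnegative extended values.

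\textbf{Step 3: expectation bound.} Taking expectations gives
\[\mathbb E[B(s_{T\wedge\tau})]\le B(s_0)+c\,\mathbb E[T\wedge\tau]\le \eta+cT,\]
using (i) and $c\ge 0$.

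\textbf{Step 4: Markov's inequality.} On the event that some state of $\omega$ lies in $S_u$, we have $s_{\tau}\in S_u$ with $\tau\le T$. Condition (ii) then gives $B(s_{T\wedge\tau})\ge\nu$. By nonnegativity of $B$ and Markov's inequality,
\[\mathbb P^\theta_{s_0}[\text{some state in }\omega\text{ in }S_u]\le \mathbb P\big[B(s_{T\wedge\tau})\ge\nu\big]\le \frac{\eta+cT}{\nu}.\]
One could instead use Ville's maximal inequality for the nonnegative supermartingale $B(s_t)+c(T-t)$, which gives the same bound. The stopping approach avoids needing the maximal inequality.

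\textbf{Main obstacle.} The only delicate point is Step 2: justifying that condition (iii), stated as a one-step conditional expectation from a state $s$, equals the conditional expectation given the whole history. This uses the Markov property of the closed-loop chain, which holds because $\pi_\theta$ depends only on the current state. The other subtlety is measurability of $\{\tau>t\}$ when applying the bound only before stopping. Everything else is routine.
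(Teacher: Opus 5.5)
Your proof is correct, and it is essentially the standard Kushner c-martingale argument that the paper relies on by citation (the paper gives no proof of its own). Stopping $B(s_{t\wedge\tau})-c\,(t\wedge\tau)$ at the first entrance into $S_u$ and applying Markov's inequality is the optional-stopping form of applying the maximal inequality to the nonnegative supermartingale $B(s_t)+c\,(T-t)$, as you note; your handling of the delicate points (excluding $a_t$ from $\mathcal F_t$ so the Markov property reduces $\mathbb E[B(s_{t+1})\mid\mathcal F_t]$ to condition (iii), and $\{\tau>t\}\in\mathcal F_t$ for $t<T$) is also right.
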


Theorem~\ref{thm:safe-prob} gives a bound on safety using barrier certificates and assuming full knowledge of the model.
Problem~\ref{prob-def} assumes the model of the system is unknown and requires solving a safety-constrained optimization.
Our proposed approach fills the gap between Theorem~\ref{thm:safe-prob} and the solution to Problem~\ref{prob-def}.


\smallskip
\noindent
\textbf{Reproducing Kernel Hilbert Spaces (RKHS).}
Let $\mathcal{H}$ be a Hilbert space of functions $f:S\rightarrow \mathbb{R}$
with inner product $\langle\cdot,\cdot\rangle_{\mathcal{H}}$ and a kernel $k : S \times S \rightarrow \mathbb{R}$. 
Each RKHS is equipped with a dot product which satisfies the reproducing property:
\begin{align*}
    \langle f(\cdot), k(s,\cdot) \rangle_{\mathcal{H}} = f(s) \quad\text{ and }\quad
    \langle k(s,\cdot),k(s',\cdot) \rangle_{\mathcal{H}} = k(s,s')~.
\end{align*}

Intuitively, a function evaluation at any point $s\in S$ can be expressed as an inner product. The function $k$ is called the \emph{reproducing kernel} since it reproduces the value of $f$ at $s$ via the inner product.
For any set $Y$, we will use the notations $k_Y$ and $\mathcal H_{k_Y}$ to denote respectively the kernel and its RKHS on $Y$.

\begin{definition}[Mean embedding (ME)]
Consider a kernel $k_{S} : S \times S \rightarrow \mathbb{R}$ and the RKHS $\mathcal{H}_{k_S}$ on $S$. The mean embedding (ME) of a probability measure $p : \mathcal{B}(S)\rightarrow[0,1]$ is given via the mean map $\mu_{k_{S}}:\mathcal P(S)\rightarrow\mathcal H_{k_{S}}$:
\begin{equation}
\mu_{k_{S}}(p) = \underset{s\sim p}{\mathbb{E}} [k_{S}(\cdot,s)] = \int_{S} k_{S}(\cdot,s)\cdot dp(s)~.     
\end{equation}
\end{definition}
The reproducing property carries on to the ME,
allowing to compute the expected value of a function $f\in\mathcal H_{k_S}$ via its inner product with the ME :~ 
    $\mathbb E_{s\sim p}[f(s)] = \langle f, \mu_{k_{S}}(p)\rangle_{\mathcal H_{k_S}}$.
Using ME, we define the maximum mean discrepancy (MMD) to be the distance between two probability measures $p,p'$ in $\mathcal{H}_{k_S}$ as $\parallel \mu_{p}(\cdot) - \mu_{p'}(\cdot) \parallel_{\mathcal{H}_{k_S}}$~\citep{gretton12}.

\begin{definition}[Conditional mean embedding (CME)]
Given the RKHS $\mathcal{H}_{k_{SA}}$ on the space $S\times A$ with a kernel $k_{SA}$, and the RKHS $\mathcal{H}_{k_S}$ on the space $S$ with a kernel $k_S$,
The CME of a conditional probability measure $p:S\times A\times \mathcal{B}(S)\rightarrow[0,1]$ is an $S\times A$-measurable random variable taking values in $\mathcal{H}_{k_S}$ given by
    \begin{equation}
    \label{cme1}
        \mu(p)(\cdot,s,a) = \underset{s^+\sim p(s,a,\cdot)}{\mathbb{E}} [k_{S}(\cdot,s^+)| s,a]~. 
    \end{equation}
\end{definition}

Analogous to the non-conditional case,
we can compute the expected value of a function $f \in \mathcal{H}_{k_S}$
via its inner product with the CME:~
    $\underset{s^+\sim p(s,a,\cdot)}{
    \mathbb{E}} [f(s^+)|s,a] = \langle f(\cdot), \mu(p)(\cdot, s,a) \rangle_{\mathcal{H}_{k_S}}$.
To establish barrier functions, we start by reformulating
the left-hand side of condition 3 of the barrier function definition as an inner
product with the CME, i.e., we have
\begin{equation}
\label{eq:expected_barrier}
\mathbb{E}_{\mathcal{T}} [
 B(s^{+}) | s,a] = \langle B,  \mu(s,a)  \rangle_{\mathcal{H}_{k_{S}}}~. 
\end{equation}


\begin{remark}
\label{rem:G}
For mathematical consistency, we restrict the unknown model of the system to a known function space. In particular, we assume that the CME of the probability kernel $\mathcal{T}$ lives in a vector-valued RKHS of functions from $S \times A$ to $ \mathcal{H}_{k_{s}}$, denoted by $\mathcal{G}$ \citep{Park20}.
\end{remark}

We utilize the above observation and empirical computation of the CME to provide a solution for Problem~\ref{prob-def}, as described in the next section.

\section{Kernel-based Safe Exploration}


Our goal is to solve Problem~\ref{prob-def} by simultaneously learning an
optimal policy and a barrier function using data from the unknown CMDP.
However, learning the barrier function is a data-intensive process~\citep{kordabad2024control,salamati2024data,Dai23} and estimating the conditional
expectation in the condition (iii) of the barrier function is difficult.
To tackle these challenges, we use RKHS and CME to learn a valid barrier function. We show in this section that the CME transforms the problem of finding a valid barrier function into a linear program which can be solved efficiently. 
We also show that the barrier function learned via the approximated CME converges to the true barrier function as the size of the dataset increases (cf. Theorem~\ref{thm-conv}). 
%




The overall algorithm, called \emph{kernel-based Safe Exploration} (\textit{KBSE}), learns a controller online in a Deep-RL setting for the CMDP $\mathcal M$ with an unknown probability kernel $\mathcal T$. In the following subsections, we describe the main steps of the KBSE algorithm in detail, with the algorithm included in Appendix~\ref{sec-KBSE-algorithm}.


\subsection{Data collection}
\label{sec-datacoll}
We use data generated during exploration, denoted as $\mathcal R = \{\langle \hat s_i,\hat a_i,\hat r_i,\hat s_{i}^+ \rangle,i=1,2,\ldots,N\}$. We can also denote the dataset alternatively as  $ \mathcal{R} = (\hat{S},\hat{A},\hat R, \hat{S}^+)$, with
\begin{equation}
\label{eq-data-coll}
    \hat{S} = [ \hat{s}_1,\ldots,\hat{s}_N ]^T~,
    ~\hat{A} = [ \hat{a}_1,\ldots,\hat{a}_N ]^T~,
    ~\hat{R} = [ \hat{r}_1,\ldots,\hat{r}_N ]^T~,
    ~\hat{S^+} = [\hat{s}_1^+,\ldots,\hat{s}_N^+ ]^T~,
\end{equation}
where $\hat r_i = R(\hat s_i)$ and $s_i^+\sim\mathcal T(\hat s_i,\hat a_i,\cdot)$ for all $i$.
As an off-policy RL method that stores all the transitions collected during training in a replay buffer, we randomly sample from this buffer to generate independent and identically distributed (iid) samples.

For the CME and approximation of the expectation, we select the kernel $k_S$ to be a radial basis function (RBF) of the form
    $k_{S}(s_i,s_j) = \exp \left[ \frac{-\| s_i - s_j \|^2}{2\cdot \sigma^2} \right],$ for all $s_i,s_j \in S$,
and similarly for the kernel $k_{SA}$ of the product space $S\times A$.
We define the corresponding \emph{Gram matrices}
$K_{\hat{S}}$ and $K_+$ as
    $K_{\hat{S}} := [k_{S}(\hat s_i,\hat s_j)]_{ij} \text{ and }
    K_{+} := [k_{S}(\hat s_i^+,\hat s_j^+)]_{ij}$.
We also define the vector-valued functions     $k_{\hat{S}}(s) := [k_{S}(s,\hat s_i)]_{i}$  and 
    $k_{+}(s) := [k_{S}(s,\hat s_i^+)]_{i}$.
For the product space $S\times A$, the Gram matrix $K_{\hat S\hat A}$ and the function $k_{\hat S\hat A}(s,a)$ are defined similarly.

The function $\mathtt{sample\_data }$ used in Algorithm~\ref{algo1} in Appendix~\ref{sec-KBSE-algorithm} implements the generation of iid samples from $\mathcal{R}$.

\subsection{Generating a Valid Barrier Function}
\label{sec-learn-barr}

To get a valid barrier function, we learn a barrier function from the data and perform a validity check iteratively until a valid barrier function is generated.

 \smallskip
 \noindent
\textbf{Learning barrier function.}
We define the barrier $B$ as a linear combination of the RBF kernel functions, given as 
\begin{equation}
\label{eq:barrier_linear}
    B(s) = \sum_{i=1}^{N} \alpha_i \cdot k_s(s,s_i)~.
\end{equation}
Using the safety specification $\varphi = \tuple{S_u, T}$, we classify each state $\hat s_i \in S$ using a binary safe/unsafe label $Y$.
We use linear regression with regularization~\citep{montgomery12} to solve for $\alpha$ as
\begin{equation}
    \underset{\alpha}{\min} \| \alpha \cdot K_{\hat{S}}  - Y \|^2 + \lambda \| \alpha \|^2~.
\end{equation}
We compute $\eta = B(s_0)$ and $\nu = \min_{s\in S_u} B(s)$. The function in \eqref{eq:barrier_linear} satisfies the conditions (i) and (ii) of being a barrier function if $\nu>\eta$.

\smallskip
\noindent
\textbf{Barrier function validation.}
%
In order to compute the constant $c$ in the condition (iii) of the barrier function and validate the function in \eqref{eq:barrier_linear},
we compute a data-driven empirical CME $\hat \mu(\cdot,s,a)$ for the CME $\mu$ in \eqref{cme1} applied to the probability kernel $\mathcal T$.
%
%
%
%
%
Using data $\mathcal{R} = (\hat{S},\hat{A},\hat R, \hat{S}^+)$, the empirical CME~\citep{Grunewalder12,Thorpe21} is
\begin{equation}
\label{eq:empirical_CME}
    \hat{\mu}(\cdot,s,a) = k_{\hat{S}\hat{A}}(s,a)^{T} [K_{\hat{S}\hat{A}}+\lambda N \mathbb{I}_N]^{-1}  k_{+}(\cdot)~,
\end{equation}
where $\lambda \ge 0$ is a regularization constant,
$\mathbb{I}_N$ is the identity matrix with dimension $N$.
%
%
%
Using the reproducing property, we have the following approximation for the conditional expectation in
\eqref{eq:expected_barrier}:
\begin{equation}
\label{eq-inner-prod}
 \langle B, \hat{\mu}(\cdot, s,a) \rangle_{\mathcal{H}_{k_{S}}}
 = k_{\hat{S}\hat{A}}(s,a)^{T} [K_{\hat{S}\hat{A}} +\lambda N \mathbb{I}_N]^{-1} B(\hat{S}^+)~.
\end{equation}

For simplicity, we denote
\begin{equation}
\label{eq:W}
    W(s,a)^T=k_{\hat{S}\hat{A}}(s,a)^{T} [K_{\hat{S}\hat{A}} +\lambda N \mathbb{I}_N]^{-1}~.
\end{equation} 


The empirical CME $\hat{\mu}$ deviates from the true CME $\mu$ by at most $\epsilon$ in the $\mathcal{G}$-norm with probability $1-\zeta$, where the approximation precision $\epsilon$ is an error bound to represent the Maximum Mean Discrepancy (MMD)~\citep{Nemmour22} radius of the RKHS ambiguity set centered at the empirical CME.
We therefore need the barrier condition to hold robustly over all CMEs within this $\epsilon$-ball.
We define an ambiguity set $\mathcal{C}_{\epsilon}$, with MMD $\epsilon$, centered at the empirical CME $\hat{\mu}(\cdot, s,a)$ such that the true CME $\mu(\cdot, s,a)$ lies within the ambiguity set with probability at least $(1-\zeta)$, i.e., 
\begin{equation}
    \mathbb{P}(\mu(s,a) \in \mathcal{C}_{\epsilon}) \geq 1- \zeta, \text{ where } \mathcal{C}_{\epsilon} = \bigl\{ \mu \in \mathcal{G} | \parallel \mu - \hat{\mu} \parallel_{\mathcal{G}} \leq \epsilon
  \bigl\}. 
  \label{eq:amb}
\end{equation}

\begin{thm}
    Consider the dataset $ \mathcal{R} = (\hat{S},\hat{A},\hat R, \hat{S}^+)$ with the kernels  $k_S$ and $k_A$ and the Gram matrices $K_{\hat{S}}$, $K_{\hat{A}}$ and $K_+$. 
    Consider the ambiguity set $\mathcal{C}_{\epsilon}$ centered at the empirical CME $\hat\mu$ in \eqref{eq:empirical_CME} with confidence $(1-\zeta)$. If there exists a function $B : S \rightarrow \mathbb{R}_{\geq 0}$, $B \in \mathcal{H}_{k_S}$ with $\bar{B} \geq \| B \|_{\mathcal{H}_{k_S}}$, such that for all $s\in S$ there is an $a\in A$ satisfying
    \begin{equation}
    \label{bar_eq}
     W(s,a)^T\cdot B(\hat{S}^+) - B(s) \leq c - \epsilon\cdot \sqrt{k_{SA} ((s,a),(s,a))}\cdot \bar{B},
    \end{equation}
    for some $c\geq 0$ with $W(s,a)$ in \eqref{eq:W},
    then B satisfies the condition (iii) of the barrier function with constant $c$ and confidence $(1-\zeta)$.
\end{thm}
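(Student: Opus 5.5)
The plan is to bound the true conditional expectation of $B$ by the empirical one plus a robustness term, working on the event $\{\mu \in \mathcal{C}_\epsilon\}$, which has probability at least $1-\zeta$ by \eqref{eq:amb}. Fix $s\in S$ and let $a\in A$ be the action for which \eqref{bar_eq} holds. By \eqref{eq:expected_barrier}, the true expectation is $\mathbb{E}_{\mathcal T}[B(s^+)\mid s,a]=\langle B,\mu(\cdot,s,a)\rangle_{\mathcal H_{k_S}}$. We split it as
\begin{equation*}
\langle B,\mu(\cdot,s,a)\rangle_{\mathcal H_{k_S}} = \langle B,\hat\mu(\cdot,s,a)\rangle_{\mathcal H_{k_S}} + \langle B,\mu(\cdot,s,a)-\hat\mu(\cdot,s,a)\rangle_{\mathcal H_{k_S}} .
\end{equation*}
The first term equals $W(s,a)^T B(\hat S^+)$ by \eqref{eq-inner-prod} and \eqref{eq:W}.

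The second term is handled in two steps. Cauchy--Schwarz in $\mathcal H_{k_S}$ gives
\begin{equation*}
\langle B,\mu(\cdot,s,a)-\hat\mu(\cdot,s,a)\rangle_{\mathcal H_{k_S}} \le \|B\|_{\mathcal H_{k_S}}\,\|\mu(\cdot,s,a)-\hat\mu(\cdot,s,a)\|_{\mathcal H_{k_S}} \le \bar B\,\|\mu(\cdot,s,a)-\hat\mu(\cdot,s,a)\|_{\mathcal H_{k_S}} .
\end{equation*}
We then pass from the pointwise $\mathcal H_{k_S}$-norm to the $\mathcal G$-norm. Following Remark~\ref{rem:G}, take $\mathcal G$ to be the vector-valued RKHS with operator-valued kernel $k_{SA}((s,a),(s',a'))\,\mathrm{Id}_{\mathcal H_{k_S}}$. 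Its reproducing property gives, for $g\in\mathcal G$ and $h\in\mathcal H_{k_S}$,
\begin{equation*}
\langle g(s,a),h\rangle_{\mathcal H_{k_S}} = \langle g, k_{SA}(\cdot,(s,a))h\rangle_{\mathcal G} .
\end{equation*}
Applying this with $g=\mu-\hat\mu$ and $h=g(s,a)$, then Cauchy--Schwarz in $\mathcal G$, yields
\begin{equation*}
\|g(s,a)\|_{\mathcal H_{k_S}} \le \sqrt{k_{SA}((s,a),(s,a))}\,\|g\|_{\mathcal G} .
\end{equation*}
On the event $\mu\in\mathcal C_\epsilon$ we have $\|g\|_{\mathcal G}\le\epsilon$, so the second term is at most $\epsilon\sqrt{k_{SA}((s,a),(s,a))}\,\bar B$.

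Combining the two terms with \eqref{bar_eq} gives
\begin{equation*}
\mathbb{E}[B(s^+)\mid s,a]-B(s) \le W(s,a)^T B(\hat S^+) + \epsilon\sqrt{k_{SA}((s,a),(s,a))}\,\bar B - B(s) \le c .
\end{equation*}
The empirical CME \eqref{eq:empirical_CME} is a finite kernel expansion and hence lies in $\mathcal G$, so $g=\mu-\hat\mu\in\mathcal G$ as required. The event $\mu\in\mathcal C_\epsilon$ is a single event about the whole function $\mu$, so the bound holds simultaneously for all $s$ with confidence $1-\zeta$. Finally, under the policy that picks this $a$ at each $s$ (or any policy supported on such actions, after averaging over actions), condition (iii) holds with constant $c$.

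The main obstacle is the evaluation-norm inequality $\|g(s,a)\|\le\sqrt{k_{SA}((s,a),(s,a))}\,\|g\|_{\mathcal G}$. It depends on identifying $\mathcal G$ with the separable operator-valued kernel $k_{SA}\,\mathrm{Id}$, which Remark~\ref{rem:G} only implicitly fixes. I would state this identification explicitly, citing the construction in \citet{Park20}. A secondary point is how condition (iii), stated for the policy $\pi_\theta$, relates to the existentially chosen action $a$: the statement should be read for the safety-corrected policy selecting such actions.
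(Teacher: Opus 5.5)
Your proof is correct and follows essentially the same route as the paper: the same split of $\langle B,\mu(\cdot,s,a)\rangle_{\mathcal{H}_{k_S}}$ into the empirical term $W(s,a)^T B(\hat S^+)$ plus an error term, with the error bounded via the reproducing property of the separable operator-valued kernel $k_{SA}\,\mathrm{Id}_{\mathcal{H}_{k_S}}$ on $\mathcal{G}$ and Cauchy--Schwarz in $\mathcal{G}$. The paper applies the reproducing property with $B$ itself, $\langle B, g(s,a)\rangle_{\mathcal{H}_{k_S}}=\langle k_{SA}(\cdot,(s,a))B, g\rangle_{\mathcal{G}}$, whereas you first use Cauchy--Schwarz in $\mathcal{H}_{k_S}$ and then an evaluation-norm bound, which is only a cosmetic reordering; your remarks on the bound holding uniformly in $s$ on the single event $\mu\in\mathcal{C}_\epsilon$, and on reading condition (iii) for a policy that selects the certified actions, make explicit points the paper leaves implicit.
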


\begin{proof}
    Refer Appendix~\ref{sec-proofs}.
\end{proof}



To compute the constant $c$ for the barrier function $B$, we convert the constraint in \eqref{bar_eq} into a min-max optimization problem, as follows:
\begin{equation}
 c = \underset{s\in S}{\min}~~\underset{a\in A}{\max}\left[ W(s,a)^T\cdot B(\hat{S}^+)
 - B(s) + \epsilon\cdot \sqrt{k_{SA} ((s,a),(s,a))}\cdot \bar{B}\right]. 
\label{final-eq}
\end{equation}

The function $\mathtt{compute\_BC} $ in Algorithm~\ref{algo1} in Appendix~\ref{sec-KBSE-algorithm} generates a valid barrier function.

\subsection{Finding the Safe Actions}

Once we compute a valid barrier function, we use the inequalities of the barrier function to identify and reduce the number of safety violations. 
To handle these safety violations and guide locally the reinforcement learning towards safe states, we need the local dynamics of the system.
We learn a local linear dynamics of the CMDP, represented by matrices $P$ and $Q$, using state transition data $\langle \hat{s}_t,\hat{a}_t,\hat{s}_{t+1} \rangle$. We use regression and solve the following optimization problem~\citep{montgomery12}:
\begin{equation}
 \label{eq:regr}
    \underset{P,Q}{\min}\sum_{i=t-H}^{t-1} \| \hat{s}_{i+1} -P\cdot \hat{s}_i - Q\cdot \hat{a}_i \|_{2}^2~,
\end{equation}
where $H$ denotes the number of local transitions used to learn the matrices $P$ and $Q$.


Assuming safety specification violation at the state $s_t$, we solve a quadratic optimization problem that modifies the current action $a_t$  to find the action $\bar{a}$ closest to $a_t$ leading to a safe state $s_{t+1}$, given as
\begin{equation}
    \label{eq:qpprob}
    \bar{a}_t = \underset{\bar{a}}{\arg\min} \| \bar{a} - a_t \|^2\nonumber
    ~~~~~~~\text{s.t.}~~~ B(s_{t+1}) \leq \nu, 
    ~~~ s_{t+1} = P\cdot s_t + Q\cdot \bar{a}~.\numberthis
\end{equation}

The above optimization ensures that we find the safe action $\bar a_t$ close to the action $a_t$ generated by the control policy. This ensures that the learning of the control policy $\pi_\theta$ by the reinforcement learning algorithm remains stable.

The function $\mathtt{get\_local\_dynamics}$ in step~\ref{lf1} of Algorithm~\ref{algo1} in Appendix~\ref{sec-KBSE-algorithm} solves the Equation~\eqref{eq:regr}.
The function $\mathtt{get\_safe\_action }$ in line~\ref{lf2} of Algorithm~\ref{algo1} in Appendix~\ref{sec-KBSE-algorithm} solves the Equation~\eqref{eq:qpprob}.

 

\subsection{Properties of KBSE}
\label{sec-lbub}

With more exploration, the estimate of the conditional expectation in condition (iii) of the barrier function improves.
In the following theorem, we provide a bound on the convergence of the approximation of the conditional expectation in the barrier function to the true expectation as the sample size increases.

\begin{thm}
\label{thm-conv}
For a given confidence parameter $\zeta$, the approximation precision $\epsilon$, representing the MMD radius, decreases as the number of samples increase with the upper bound $\sqrt{\frac{C}{N}} (1 + \sqrt{2\cdot \mathtt{log}(\frac{1}{\zeta})})$, where $C$ is a constant such that $\mathtt{sup }(k_{S}(s,s)) \leq C \leq \infty$ and $N$ is the number of samples. This signifies the convergence of empirical CME $\hat{\mu}$ to the true CME $\mu$ with probability at least $(1- \zeta)$. Moreover, for a fixed $\epsilon$, the confidence in the approximation improves with the increase in the sample size $N$ according to $\zeta \leq \mathtt{exp} \left[{-\frac{1}{2}\big(\frac{\epsilon \cdot \sqrt{N}}{\sqrt{C}}-1\big)^2}\right]~.$
\end{thm}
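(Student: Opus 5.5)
The bound has the form of the classical concentration inequality for empirical kernel mean embeddings (Gretton et al.; Smola et al.): for i.i.d. samples and a kernel bounded by $C$, with probability at least $1-\zeta$,
\[
\|\hat\mu - \mu\| \le \sqrt{\tfrac{C}{N}}\Bigl(1+\sqrt{2\log(1/\zeta)}\Bigr).
\]
The plan is to reproduce that argument in the CME setting, treating the empirical embedding as an average of $N$ i.i.d. Hilbert-space-valued features. This uses the i.i.d. sampling from the replay buffer described in Section~\ref{sec-datacoll}.

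\textbf{Step 1: bounding the expectation.} Write the deviation as $\|\frac1N\sum_i \xi_i - \mathbb E\xi\|$ with $\xi_i = k_S(\cdot,\hat s_i^+)$, whose norm satisfies $\|\xi_i\|^2 = k_S(\hat s_i^+,\hat s_i^+)\le C$. A symmetrization argument bounds the expected deviation by twice a Rademacher average. Alternatively, and more simply, Jensen's inequality plus independence give
\[
\mathbb E\|\tfrac1N\textstyle\sum_i(\xi_i-\mathbb E\xi)\|^2 \le \tfrac1N\,\mathbb E\|\xi\|^2 \le \tfrac CN,
\]
hence $\mathbb E\|\hat\mu-\mu\|\le\sqrt{C/N}$. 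This produces the leading term $\sqrt{C/N}$.

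\textbf{Step 2: concentration.} The map $f(\xi_1,\dots,\xi_N)=\|\frac1N\sum\xi_i-\mathbb E\xi\|$ has bounded differences. Replacing one $\xi_i$ changes $f$ by at most $2\sqrt C/N$. McDiarmid's inequality then gives
\[
\mathbb P\bigl(f-\mathbb Ef\ge t\bigr)\le \exp\bigl(-t^2N/(2C)\bigr).
\]
Setting the right-hand side equal to $\zeta$ yields $t=\sqrt{2C\log(1/\zeta)/N}$. Adding this to Step 1 gives the stated bound on the MMD radius $\epsilon$ with probability at least $1-\zeta$. It also gives the convergence claim, since the bound is $O(N^{-1/2})$.

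\textbf{Step 3: the confidence version.} Fix $\epsilon$ and invert the same inequality. Set $\epsilon=\sqrt{C/N}\,(1+\sqrt{2\log(1/\zeta)})$ and solve for $\zeta$:
\[
\sqrt{2\log(1/\zeta)} = \tfrac{\epsilon\sqrt N}{\sqrt C}-1,
\qquad\text{so}\qquad
\zeta=\exp\Bigl[-\tfrac12\bigl(\tfrac{\epsilon\sqrt N}{\sqrt C}-1\bigr)^2\Bigr].
\]
This holds whenever $\epsilon\sqrt N>\sqrt C$. Monotonicity in $N$ is immediate from this expression.

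\textbf{Main obstacle.} The difficulty is justifying that the regularized empirical CME \eqref{eq:empirical_CME} fits this i.i.d.-average template. Its deviation in $\mathcal G$ also carries a regularization bias and depends on the operator $[K_{\hat S\hat A}+\lambda N\mathbb I_N]^{-1}$. My first attempt would follow the paper's level of rigor: interpret $\epsilon$ as the MMD radius of the embedding estimate, absorb the regularization effects into the constant $C$ (using boundedness of the RBF kernels, $\sup k\le1$), and invoke the kernel mean embedding concentration argument above. If more care is needed, I would use the vector-valued regression bounds of Park \& Muandet under Remark~\ref{rem:G}. Those would only change $C$ and the requirement $\lambda\to0$ suitably, and leave the structure of the proof intact.
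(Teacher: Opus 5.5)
Your proposal follows the paper's proof: the paper does not derive the bound but quotes it from Nemmour et al.\ (their Eq.~4) as the standard finite-sample bound $\sqrt{C/N}+\sqrt{2C\log(1/\zeta)/N}$ for empirical kernel mean embeddings, which is exactly what your Jensen-plus-McDiarmid argument reproduces, and then rearranges it for $\zeta$ as in your Step~3. The obstacle you flag is real and the paper does not address it either, since it applies this i.i.d.\ mean-embedding bound directly to the regularized CME $\hat\mu$; note, however, that your proposed patch would not work as stated, because for fixed $\lambda>0$ the regularized estimator has a bias that does not vanish as $N\to\infty$, and the known rates for the regularized CME depend on $\lambda$ and on source/smoothness conditions and are in general slower than $N^{-1/2}$, so they cannot be absorbed into the constant $C$ while keeping the $\sqrt{C/N}$ form.
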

\begin{proof}
   Refer Appendix~\ref{sec-proofs}.
\end{proof}

\begin{table*}[t]
\begin{center}
\resizebox{0.99\textwidth}{!}{
\begin{tabular} {|l|rr|r|rrrrr|rr|r|}
\toprule
   \multirow{2}{*}{Env.} & \multirow{2}{*}{$p$} & \multirow{2}{*}{$q$}  & \multirow{2}{*}{Safety specification} & \multicolumn{5}{c|}{Training Time $(\si{\second})$} & \# Safety & $90^{th}$ & Safety\\
    &  &    &  & DLag & SLag & DPID & SD & KBSE  & Viol. &  $\%ile$ & Prob.\\
  \midrule
   \textsf{SafetyPendulum}  & 3 & 1  & $ (\vartheta>-0.8)$ & $406$ & $704$ & $423$ & $718$ &  $835$ & $4211$ & $81.60$ & $0.643$\\
    \textsf{SafetyMCar}  & 2 & 1  & $ (p>-1.0) $   & $443$ & $682$ & $486$ & $662$ & $683$   & $342$ & $82.14$ & $0.572$\\
    \textsf{SafetyInvPendulum} & 4 & 1 & $(|p|<0.3)$  &  $2970$ & $3395$ & $2924$ & $3362$ & $3592$  &  $24$ & $7.05$ & $0.972$\\
  \textsf{SafetyHopper}  & 11 & 3  & $ ((z>0.8) \land  (|a|<0.2))$    & $9577$ & $10492$ &  $9312$ & $10520$  &  $12054$  & $5250$ & $68.66$ & $0.958$\\
   \textsf{SafetyWalker} & 17 & 6  & $ ((z>0.9)  \land (|a|<1))$  & $9391$  & $10682$  & $9458$  & $10508$  &  $14196$ & $7289$ & $27.33$  & $0.915$\\
  \textsf{SafetyAnt} & 27 & 8  & $ (z>0.25)$   &  $9438$ & $10472$ & $9637$ &  $10608$ & $24208$  & $105113$ & $52.60$  & $0.729$\\
    \textsf{SafetyHumanoid} & 376 & 17 &  $(z>1.05)$   & $15838$ & $17585$ & $15791$ & $17848$ &  $30452$   & $923$ & $20.60$ &  $0.872$\\
 \bottomrule
\end{tabular}
}
\caption{State dimensions ($p$), action dimensions ($q$), Safety specifications, computation time and number of safety violations for different environments (Env). For safety specifications $p$: position of the robot, $z$: height of the center of mass, $a$: pedal angle.}
\label{table-bench}
\end{center}
\vspace{-0.5cm}
\end{table*}

\begin{table}[t]
  \begin{center}
   \resizebox{\columnwidth}{!}{
 \begin{tabular}{| l | l | r | r | r|| l | l | r | r | r|} 
    \toprule 
 Env. & Algo.  & Avg. Reward & Avg. Cost & Avg. Length & Env. & Algo.  & Avg. Reward & Avg. Cost & Avg. Length  \\ 
 \midrule
  \multirow{5}{4em}{\textsf{Safety Pendulum}}  & DDPGLag & $-143.62$ & $6.8$ &  $200$ &  \multirow{5}{4em}{\textsf{Safety MountainCar}}   & DDPGLag & $-0.06$ & $0.0$ & $1000.0$  \\ 
  & SACLag &  $-120.15$  &  $4.2$  &  $200$  &  & SACLag &  $93.38$  & $12.0$   & $137.2$  \\  
   & DDPGPID & $-145.75$   &  $6.4$  & $200$ &  & DDPGID & $-0.01$   & $0.0$   &  $1000.0$  \\ 
    & SACPID &  $-162.89$  &  $8.6$  & $200$ &  & SACPID & $93.96$   & $12.8$   & $144.4$   \\ 
     & KBSE &  $-164.68$  & $7.8$   &  $200$ & & KBSE &  $93.04$  & $18.2$   & $150.4$   \\ 
    \midrule
     \multirow{5}{4em}{\textsf{Safety InvertedPendulum}}   & DDPGLag & $1000.0$ & $6.6$ & $1000.0$ & \multirow{5}{4em}{\textsf{Safety Hopper}}  & DDPGLag & $1204.37$  &  $3.4$ & $404.0$\\ 
  & SACLag &  $1000.0$  &  $0.0$  & $1000.0$  & & SACLag &  $2991.61$  &  $0.0$  & $1000.0$ \\  
   & DDPGID &  $1000.0$  & $6.6$   & $1000.0$  &  & DDPGPID &  $2110.11$  &  $0.8$ & $613.4$  \\ 
    & SACPID &  $1000.0$  & $0.0$   & $1000.0$  &  & SACPID &  $1035.41$  & $0.0$ & $1000.0$  \\ 
     & KBSE &  $1000.0$  & $0.0$   & $1000.0$  &  & KBSE &  $3203.76$  & $0.0$ & $1000.0$ \\ 
 \midrule
\multirow{5}{4em}{\textsf{Safety Walker}}  & DDPGLag & $2042.48$  & $21.2$   &  $612.4$  & \multirow{5}{4em}{\textsf{Safety Ant}}  & DDPGLag & $257.69$  & $164.0$ &  $252.4$ \\  
  & SACLag & $1734.94$ & $9.0$ & $558.60$ &  & SACLag & $-283.28$   &  $411.8$  &   $867.0$\\   
   & DDPGID &  $935.50$  &  $7.6$  &  $322.2$  & & DDPGID & $1403.84$   &  $50.0$  & $748.4$   \\ 
    & SACPID & $450.34$   &  $83.8$  &  $577.2$ &   & SACPID &  $-1257.27$  & $315.8$   & $1000.0$  \\ 
     & KBSE &  $3819.62$  &  $0.0$  &  $1000.0$ &  & KBSE & $5819.89$   & $0.0$   &  $1000.0$  \\ 
     \midrule    
\multirow{5}{4em}{\textsf{Safety Humanoid}}   & DDPGLag & $1362.88$  & $5.6$ & $275.8$ \\ 
  & SACLag & $ 5469.56$   & $0.0$   & $1000.0$   \\  
   & DDPGID & $2217.06$   & $5.2$   & $432.0$   \\ 
    & SACPID &  $5169.28$   & $0.0$   &  $1000.0$  \\ 
     & KBSE &  $5483.47$  &  $0.0$  & $1000.0$   \\ 
 \cline{1-5}
  \end{tabular}
 }
   \caption{Comparing the average reward, cost, and length  of the KBSE w.r.t baseline algorithms for all benchmarks.}
  \label{table-full}
  \end{center}
\vspace*{-1em}
\end{table}

\section{Evaluation}
We now describe the setup and results for our experiments.

\smallskip
\noindent
\textbf{Experimental setup.}
All experiments are carried out on an Ubuntu22.04 machine with Intel(R) Xeon(R) Gold 6226R $@ 2.90$ GHz$\times16$ CPU, NVIDIA RTX A4000 32GB Graphics card, and 48\,GB RAM using the OMNISAFE~\citep{omnisafe} framework.

\smallskip
\noindent
\textbf{Baseline for comparison.}
Since KBSE belongs to the class of off-policy safe RL algorithms, we consider the most popular off-policy safe RL algorithms as baseline for comparison, the Lagrangian and PID-Lagrangian approaches, i.e.,
DDPGLagrangian, SACLagrangian~\citep{Ray2019} and DDPGPID, SACPID ~\citep{stooke20}.
We have not reported the comparison results with other safe RL algorithms such as CPO~\citep{Achiam17}, PPO-Lag~\citep{Ray2019}, or CUP~\citep{Yang22} as these are \emph{on-policy} algorithms and are known to suffer from high variance and slow convergence compared to the off-policy methods~\citep{chung21a}.


\smallskip
\noindent
\textbf{Benchmarks.}
Our experiments include classical and  Gym-Mujoco benchmarks~\citep{gym} involving safety constraints. See Table~\ref{table-bench} with the details of the 
benchmarks described in the supplementary material and the hyper-parameters in Appendix~\ref{sec-kbse-param}.


\subsection{Results}
\textbf{Training.}
The training time needed to learn the control policy and the number of safety violations encountered during training for the KBSE algorithm are presented in Table~\ref{table-bench}.
The column $90^{th} \%ile$ provides the time-step by which $90\%$ of the total number of safety violations have occurred for the KBSE algorithm, as a percentage of the training horizon.
This indicates that the number of safety violations decreases significantly in the later stages of training. Given that the baselines do not permit quantitative safety violations in their problem formulation, the metrics on safety violations (e.g., 90th percentile) are not applicable in these cases.

We also observe that the training time for KBSE algorithm increases with increasing number of safety violations. This is most apparent in case of the SafetyAnt benchmark where the training time is twice than that of the baselines.
The training time depends on the number of safety violations encountered during exploration (cf. Appendix~\ref{sec-cost-viol}). This, in turn, depends on the shape of the barrier function - one with narrow "safe set" (i.e. stricter safety specifications) would cause more violations. 
The last column of Table~\ref{table-bench} provides the lower bound on the safety probability guaranteed by the KBSE algorithm for each benchmark, a feature that is not integrated in the available safe RL algorithms.

\smallskip
\noindent
\textbf{Testing.}
Table~\ref{table-full} compares the control policies learned by the KBSE algorithm w.r.t. the baselines.
For testing policies, we use the metrics \emph{Average Episodic Reward}, \emph{Average Episodic Cost}, and \emph{Average Episodic Length}. 
 In general, the results demonstrate that the KBSE algorithm achieves higher reward and lower cost compared to the baselines.
In benchmarks with high dimension, the KBSE algorithm shows superior performance that can be attributed to the larger exploration space for these high-dimensional benchmarks, providing a significant margin for improvement.
On the other hand, in low-dimensional benchmarks such as SafetyPendulum and Safety-MountainCar, the performance of KBSE is similar to the baselines as the training occurs for a shorter duration and hence the margin of improvement is smaller.

Further experimental results are available in the appendix, showing synthesized barrier functions (Appendix~\ref{sec-viz-barr}), plots for $\epsilon$ and $\bar{B}$ (Appendix~\ref{sec-plot-ep}), the relationship between training time and safety violations (Appendix~\ref{sec-cost-viol}).

\section{Conclusion}

We have proposed a novel safe exploration algorithm using online barrier functions constructed using kernel mean embeddings (CME). 
Our approach does not require knowledge of the system dynamics and uses barrier functions that involve chance constraints that allow the violation of the 
safety specification up to a given probability threshold. Future work includes enhancing the performance using sparse CME and considering temporal behaviors beyond safety.

\section{Acknowledgments}
This work was supported by the European Research Council (ERC, grant 101089047), the European Innovation Council (EIC, grant 101070802), and the Deutsche Forschungsgemeinschaft (DFG, project 389792660, TRR 248—CPEC).

\bibliography{references}

\newpage
\appendix
\newpage
\pagebreak

\appendix

\begin{center}
{\Huge Appendix}
\end{center}

\setcounter{secnumdepth}{2}

\setcounter{equation}{0}

\section{KBSE Algorithm}
\label{sec-KBSE-algorithm}
\subsection{Algorithm}
The overall algorithm, called \emph{kernel-based Safe Exploration} (\textit{KBSE}) and presented in Algorithm~\ref{algo1}, learns a controller online in a Deep-RL setting for the CMDP $\mathcal M$ with an unknown probability kernel $\mathcal T$.

The initialization for the algorithm is performed in lines~\ref{l11}-\ref{lf0}.
The dataset $\mathcal R$,
the initial state $s_0$,
the bound on the norm of the barrier function $\bar B$
and the approximation precision $\epsilon$
are initialized in line~\ref{l11}.
A random policy is instantiated in line~\ref{l12}. Under this policy, a collection of sampled transitions of the form $\langle s_t,a_t,r_t,s_{t+1} \rangle$ are collected in $w$ in line~\ref{l13}, and a valid barrier function together with its associated constants $(\nu,\eta,c)$ are computed in line~\ref{lf0}. The sampled transitions $w$ are added to the dataset $\mathcal R$ in line~\ref{lf_update_data}.

The lines~\ref{l132}-\ref{l25} describe the main loop of the algorithm with the training horizon $T_h$. In line~\ref{l_safe}, we check for a safety violation at state $s_t$.
In case of safety violation, the algorithm invokes
the function $\mathtt{get\_local\_dynamics }$ in line~\ref{lf1} to obtain the local dynamics of the system and use it to
find the closest safe action using the function $\mathtt{get\_safe\_action }$ in line~\ref{lf2}.
In line~\ref{l15}, the system is simulated to generate a transition sample and appends it to the dataset $\mathcal{R}$ (line~\ref{l16}). In line~\ref{l17}, a batch of data samples from $\mathcal{R}$ is used to update the policy, similar to the policy update in offline reinforcement learning~\citep{Levine20}.

The lines~\ref{l19}-\ref{lf4} are executed after each episode to update the approximation error of the condition (iii) of the barrier function. This involves recomputing the upper bound of RKHS norm of the barrier function $\bar{B}$ (line~\ref{lf3}) and the approximation precision $\epsilon$ (line~\ref{lf4}) using the updated dataset (line~\ref{l19}) and confidence parameter $\zeta$.  
The lines~\ref{l21}-\ref{lf5} are executed at the end of each epoch, which is over a time horizon longer than the episode length. An epoch contains a number of episodes. The condition \texttt{epoch ends} is used to compute the barrier function and \texttt{episode ends} is used to update the barrier function
parameters. The function  $\mathtt{compute\_BC} $ in line~\ref{lf5} involves learning the barrier function over the latest dataset from line~\ref{l21} and performing a validity check iteratively until a valid barrier function is obtained. Although the iterative check entails an overhead, but it is much less expensive and more scalable than computing the barrier function a priori or offline. The parameters of the learned barrier function will give the bound on the safety probability $(1-\delta)$ using the line~\ref{line:safety_prob}.

\RestyleAlgo{ruled,vlined,linesnumbered}
\LinesNumbered

\begin{algorithm2e}[!ht]
\caption{{KBSE algorithm for CMDP}}
\label{algo1}

\DontPrintSemicolon 
\LinesNumbered

\SetKwProg{myproc}{procedure}{}{}

\myproc{$\mathtt{main}$()}{

\textbf{Input:} CMDP $\mathcal M = \tuple{S, A, \mathcal{T}, R}$ with unknown $\mathcal T$, safety specification $\varphi = \tuple{S_u, T}$, parameterized policy $\pi_\theta$, training horizon $T_h$, confidence parameter $\zeta\in(0,1)$ for the CME

\label{l10}

$\mathcal{R} \gets []$;\, $s_0 \gets \rho_0$;\,$\bar{B} \gets \mathtt{initialize }()$; \,
$\epsilon \gets\mathtt{initialize }(\zeta)$;\,
\label{l11} 


$\pi_{\theta} \gets \mathtt{initialize\_policy}()$\;\label{l12}

$w \gets \mathtt{initial\_rollout }(\pi_{\theta})$\;\label{l13}

$(B,\nu,\eta,c) \gets \mathtt{compute\_BC }(w,\epsilon,\bar{B})$\;\label{lf0}

$\mathcal{R} \gets \mathcal{R}  \cup w$\;\label{lf_update_data}


\While{$t~<~T_h$}{\label{l132}
$a_t \gets \pi_{\theta}(s_t)$\; \label{l14}

\If{$s_t\in S_u$}{\label{l_safe}
/* Safety is violated */\;
$P, Q \gets \mathtt{get\_local\_dynamics }(\mathcal{R})$\;\label{lf1}

$\bar{a}_t \gets \mathtt{get\_safe\_action }(B,s_t,a_{t},P,Q)$\;\label{lf2}
$a_t \gets \bar{a}_t$\;\label{lf30}
}

$s_{t+1}$ sampled from $\mathcal T(s_t,a_t,\cdot)$, $r_t = R(s_t)$
\label{l15}

$\mathcal{R} \gets \mathcal{R}  \cup \{\langle s_t,a_t,r_t,s_{t+1} \rangle\}$\;\label{l16}


$\mathtt{update\_policy }(\pi_{\theta},\mathcal{R})$\;\label{l17}

\If{$\texttt{episode ends}$}{\label{l18}
$w \gets \mathtt{sample\_data }(\mathcal{R})$\;\label{l19}

$\bar{B} 
\gets \mathtt{update\_upper\_bound }${ ($B,w$)}\;\label{lf3}

$\epsilon 
\gets \mathtt{update\_epsilon }${ ($w,\zeta$)}\;\label{lf4}

}

\If{$\texttt{epoch ends}$}{\label{l20}
$w \gets \mathtt{sample\_data }(\mathcal{R})$\;\label{l21}
$(B,\nu,\eta,c) \gets \mathtt{compute\_BC }${ ($w,\epsilon,\bar{B}$)}\;\label{lf5}
}

$t \gets t + 1$\;\label{l25}
}
$\delta \gets (\eta + cT)/\nu$\; \label{line:safety_prob}
\Return{ Policy ${\pi}_{\theta}$ and barrier function $B$ with safety probability $(1-\delta)$ guaranteed with confidence $(1-\zeta)$}\;  \label{l26}
}

\end{algorithm2e}

\section{Proofs of Theorems}
\label{sec-proofs}

\vskip10pt

\setcounter{thm}{1}
\begin{thm}
    Consider the dataset $ \mathcal{R} = (\hat{S},\hat{A},\hat R, \hat{S}^+)$ with the kernels  $k_S$ and $k_A$ and the Gram matrices $K_{\hat{S}}$, $K_{\hat{A}}$ and $K_+$. 
    Consider the ambiguity set $\mathcal{C}_{\epsilon}$ centered at the empirical CME $\hat\mu$ in \eqref{eq:empirical_CME} with confidence $(1-\zeta)$. If there exists a function $B : S \rightarrow \mathbb{R}_{\geq 0}$, $B \in \mathcal{H}_{k_S}$ with $\bar{B} \geq \| B \|_{\mathcal{H}_{k_S}}$, such that for all $s\in S$ there is an $a\in A$ satisfying
    \begin{equation}
     W(s,a)^T\cdot B(\hat{S}^+) - B(s) \leq c - \epsilon\cdot \sqrt{k_{SA} ((s,a),(s,a))}\cdot \bar{B},
    \end{equation}
    for some $c\geq 0$ with $W(s,a)$ in \eqref{eq:W},
    then B satisfies the condition (iii) of the barrier function with constant $c$ and confidence $(1-\zeta)$.
\end{thm}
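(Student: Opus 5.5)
We argue on the event $E=\{\mu\in\mathcal C_\epsilon\}$, which by \eqref{eq:amb} has probability at least $1-\zeta$, so it suffices to show that on $E$ condition (iii) holds with constant $c$. Fix $s\in S$ and pick the action $a\in A$ given by the hypothesis, understood as the action the policy takes at $s$. Since the hypothesis supplies a specific action rather than a bound for every action, we read condition (iii) for this deterministic choice of action, or for a policy supported on actions satisfying \eqref{bar_eq}; averaging over such actions preserves the inequality.

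First we rewrite the true conditional expectation using \eqref{eq:expected_barrier} as $\mathbb E_{\mathcal T}[B(s^+)\mid s,a]=\langle B,\mu(\cdot,s,a)\rangle_{\mathcal H_{k_S}}$. We then split it as
\[
\langle B,\hat\mu(\cdot,s,a)\rangle_{\mathcal H_{k_S}}+\langle B,\mu(\cdot,s,a)-\hat\mu(\cdot,s,a)\rangle_{\mathcal H_{k_S}} .
\]
By \eqref{eq-inner-prod} and \eqref{eq:W}, the first term equals $W(s,a)^T B(\hat S^+)$. This identity follows from the reproducing property applied entrywise to $k_+(\cdot)$, since $\langle B,k_S(\cdot,\hat s_i^+)\rangle=B(\hat s_i^+)$.

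The main step is bounding the error term. By Cauchy--Schwarz in $\mathcal H_{k_S}$, it is at most $\|B\|_{\mathcal H_{k_S}}\,\|\mu(\cdot,s,a)-\hat\mu(\cdot,s,a)\|_{\mathcal H_{k_S}}$. We then pass from the pointwise $\mathcal H_{k_S}$-norm to the $\mathcal G$-norm using the structure of the vector-valued RKHS $\mathcal G$ from Remark~\ref{rem:G}. We take $\mathcal G$ with operator-valued kernel $k_{SA}(\cdot,\cdot)\,\mathrm{Id}_{\mathcal H_{k_S}}$, as in \citet{Park20}. Then the reproducing property of $\mathcal G$ gives $\langle F(s,a),h\rangle_{\mathcal H_{k_S}}=\langle F,k_{SA}(\cdot,(s,a))h\rangle_{\mathcal G}$ for every $h\in\mathcal H_{k_S}$, and hence
\[
\|F(s,a)\|_{\mathcal H_{k_S}}\le \sqrt{k_{SA}((s,a),(s,a))}\,\|F\|_{\mathcal G}.
\]
Applying this with $F=\mu-\hat\mu$ and using $\|\mu-\hat\mu\|_{\mathcal G}\le\epsilon$ on $E$, together with $\|B\|\le\bar B$, bounds the error term by $\epsilon\sqrt{k_{SA}((s,a),(s,a))}\,\bar B$. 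Identifying this evaluation inequality correctly for the chosen $\mathcal G$ is the only delicate point, and everything else is direct.

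Combining the two terms with \eqref{bar_eq} gives
\[
\mathbb E[B(s^+)\mid s,a]-B(s)\le W(s,a)^TB(\hat S^+)-B(s)+\epsilon\sqrt{k_{SA}((s,a),(s,a))}\,\bar B\le c .
\]
This holds for every $s$ simultaneously on the single event $E$, so condition (iii) holds with constant $c$ with confidence $1-\zeta$.
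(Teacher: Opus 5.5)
Your proof is correct and follows essentially the paper's argument: the same split of $\langle B,\mu(\cdot,s,a)\rangle_{\mathcal H_{k_S}}$ into the empirical term $W(s,a)^T B(\hat S^+)$ plus an error term, bounded through the reproducing property of the operator-valued kernel $\Gamma=k_{SA}\,\mathrm{Id}_{\mathcal H_{k_S}}$ of $\mathcal G$. The only difference is cosmetic: the paper applies Cauchy--Schwarz once in $\mathcal G$ to $\langle \Gamma(\cdot,(s,a))B,\mu-\hat\mu\rangle_{\mathcal G}$, whereas you apply it in $\mathcal H_{k_S}$ and then use the equivalent evaluation bound $\|F(s,a)\|_{\mathcal H_{k_S}}\le\sqrt{k_{SA}((s,a),(s,a))}\,\|F\|_{\mathcal G}$; your explicit handling of the action quantifier and of the single event $\{\|\mu-\hat\mu\|_{\mathcal G}\le\epsilon\}$ also makes precise points the paper leaves implicit.
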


\begin{proof}
    The inner product of the barrier function $B$ and CME $\mu(\cdot, s,a)$ in $\mathcal{H}_{k_{S}}$ can be expressed as
    \begin{align*}
        \langle B, \mu(\cdot, s,a) \rangle_{\mathcal{H}_{k_{S}}} \!\!\!=\!\! \langle B, \mu(\cdot,s,a) \!-\! \hat{\mu}(\cdot, s,a) \rangle_{\mathcal{H}_{k_{S}}} \!\!\!+\!\! \langle B, \hat{\mu}(\cdot, s,a) \rangle_{\mathcal{H}_{k_{S}}}.
    \end{align*}
     The second term gives us $W(s,a)^T B(\hat{S}^+)$ from Equation~\eqref{eq-inner-prod}. To bound the first term,
let $\Gamma: (S,A)\times (S,A) \rightarrow \mathcal{L}(\mathcal{H}_{k_{S}})$ be the operator-valued positive definite kernel of $\mathcal{G}$ (cf. Remark~\ref{rem:G} and~\cite{Li22}), given by $\Gamma((s,a),(s',a')) := k_{SA}((s,a),(s',a'))\cdot Id_{\mathcal{H}_{k_{S}}}$, where $\mathcal{L}(\mathcal{H}_{k_{S}})$ is the banach space of bounded linear operators from $\mathcal{H}_{k_{S}}$ to $\mathcal{H}_{k_{S}}$ and $Id_{\mathcal{H}_{k_{S}}}$ is the identity operator on $\mathcal{H}_{k_{S}}$.  
Using the reproducing property of $\Gamma$ (cf.~\cite{Li22}), we get
{\allowdisplaybreaks
    \begin{align*}
         \langle B, \mu - \hat{\mu}\rangle_{\mathcal{H}_{k_{S}}} &= \langle \Gamma(\cdot, (s,a))\cdot B, \mu - \hat{\mu} \rangle_{\mathcal{G}}
         \leq \epsilon \|  \Gamma(\cdot, (s,a))\cdot B \|_{\mathcal{G}} \\
         &= \epsilon \sqrt{\langle \Gamma(\cdot, (s,a))\cdot B, \Gamma(\cdot, (s,a))\cdot B \rangle_{\mathcal{G}}}\\
         &= \epsilon \sqrt{\langle B, \Gamma((s,a), (s,a))\cdot B \rangle_{\mathcal{H}_{k_{S}}}}\\
         &= \epsilon \bar{B} \sqrt{\| \Gamma((s,a), (s,a)) \|}
         = \epsilon \bar{B}  \sqrt{k_{SA} ((s,a),(s,a))}.
    \end{align*}
    }
Combining the two terms concludes the proof.
\end{proof}

\begin{thm}
For a given confidence parameter $\zeta$, the approximation precision $\epsilon$, representing the MMD radius, decreases as the number of samples increase with the upper bound $\sqrt{\frac{C}{N}} \left(1 + \sqrt{2\cdot \mathtt{log}(\frac{1}{\zeta})}\right)$, where $C$ is a constant such that $\mathtt{sup }(k_{S}(s,s)) \leq C \leq \infty$ and $N$ is the number of samples. This signifies the convergence of empirical CME $\hat{\mu}$ to the true CME $\mu$ with probability at least $(1- \zeta)$. Moreover, for a fixed $\epsilon$, the confidence in the approximation improves with the increase in the sample size $N$ according to $\zeta \leq \mathtt{exp} \left[{-\frac{1}{2}\big(\frac{\epsilon \cdot \sqrt{N}}{\sqrt{C}}-1\big)^2}\right]~.$
\end{thm}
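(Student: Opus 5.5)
The bound $\sqrt{C/N}\,(1+\sqrt{2\log(1/\zeta)})$ is exactly the standard concentration bound for an empirical kernel mean embedding, obtained from a bounded-differences (McDiarmid) inequality. I will therefore model $\epsilon$ as the deviation of an empirical average of $N$ i.i.d. feature vectors in an RKHS from its mean. Concretely, take $X_i=\phi(\hat s_i^+)$ (or the joint feature $\phi(\hat s_i,\hat a_i)\otimes k_S(\cdot,\hat s_i^+)$ in the conditional case). The samples are i.i.d. by the sampling scheme of Section~\ref{sec-datacoll}, and each satisfies $\|X_i\|\le\sqrt{C}$ because $\|k_S(\cdot,s)\|^2_{\mathcal H_{k_S}}=k_S(s,s)\le C$. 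For the RBF kernel, $C=1$ works.

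\textbf{Step 1 (mean of the deviation).} Let $F(\hat s_1^+,\dots,\hat s_N^+)=\|\frac1N\sum_i X_i-\mathbb E X\|$. By Jensen,
$\mathbb E F\le\sqrt{\mathbb E F^2}=\sqrt{\frac1N\mathbb E\|X-\mathbb EX\|^2}\le\sqrt{C/N}$.
This uses independence, so the cross terms vanish, together with $\mathbb E\|X-\mathbb EX\|^2\le\mathbb E\|X\|^2\le C$.

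\textbf{Step 2 (concentration).} Replacing one sample changes $F$ by at most $2\sqrt C/N$ (triangle inequality). McDiarmid's inequality then gives $\mathbb P(F-\mathbb EF\ge t)\le\exp(-2t^2/(N\cdot 4C/N^2))=\exp(-t^2N/(2C))$. Setting the right-hand side equal to $\zeta$ gives $t=\sqrt{2C\log(1/\zeta)/N}$. Hence, with probability at least $1-\zeta$,
\[
F\le\sqrt{\tfrac{C}{N}}\Bigl(1+\sqrt{2\log(1/\zeta)}\Bigr).
\]
I will then choose $\epsilon$ equal to this value, so that $\mathbb P(\mu\in\mathcal C_\epsilon)\ge1-\zeta$ as required in \eqref{eq:amb}. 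Since the bound is $O(N^{-1/2})$, this also gives the claimed convergence of $\hat\mu$ to $\mu$.

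\textbf{Step 3 (inversion).} For fixed $\epsilon$, I will solve $\epsilon=\sqrt{C/N}(1+\sqrt{2\log(1/\zeta)})$ for $\zeta$. This yields $\sqrt{2\log(1/\zeta)}=\epsilon\sqrt N/\sqrt C-1$, which is valid when $\epsilon\sqrt N>\sqrt C$. It follows that $\zeta=\exp[-\frac12(\epsilon\sqrt N/\sqrt C-1)^2]$. Any smaller $\zeta$ for the same $\epsilon$ is only better, and the bound decreases in $N$.

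\textbf{Main obstacle.} The hard part is legitimately identifying the $\mathcal G$-norm deviation of the regularized empirical CME \eqref{eq:empirical_CME} with an empirical-mean deviation. The regularization term $\lambda N$ introduces bias, and the true $\mu$ is a conditional object. My first attempt would be to apply Steps 1--2 to the empirical cross-covariance operator $\frac1N\sum_i\phi(\hat s_i,\hat a_i)\otimes k_S(\cdot,\hat s_i^+)$, whose features have norm at most $\sqrt{C_{SA}C}$, and to absorb the resulting constants into $C$. I would then argue that the CME is obtained by a bounded linear map of this operator under Remark~\ref{rem:G}. Failing a clean argument, I would state the theorem under the modelling assumption that $\epsilon$ is the MMD radius of the empirical embedding, in the sense of \citet{gretton12}. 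This is how the statement phrases it, namely ``$\epsilon$, representing the MMD radius.''
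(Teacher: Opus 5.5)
This is essentially the paper's argument: the paper simply cites \cite[Eqn.~4]{Nemmour22} for the bound $\epsilon\le\sqrt{C/N}+\sqrt{2C\log(1/\zeta)/N}$ and rearranges it for $\zeta$, and your Steps 1--2 (Jensen to get $\mathbb E F\le\sqrt{C/N}$, then McDiarmid with bounded differences $2\sqrt C/N$) are exactly the standard derivation of that cited inequality, while your condition $\epsilon\sqrt N>\sqrt C$ in Step 3 is a correct refinement that the paper omits. The paper does not resolve your ``main obstacle'' either: it applies this unconditional mean-embedding bound directly to the regularized CME, so it implicitly adopts your fallback assumption. Do not expect the cross-covariance route to recover the stated constant, for two reasons: $\hat\mu$ depends on the random inverse $(K_{\hat S\hat A}+\lambda N\mathbb I_N)^{-1}$, so it is not a fixed bounded linear image of one empirical average and its error carries $\lambda$-dependent factors; and for fixed $\lambda$ it converges to the $\lambda$-regularized CME rather than to $\mu$.
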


\begin{proof}
    From~\cite[Eqn. 4]{Nemmour22}, the error bound of the MMD estimators certify that the true CME $\mu$ is contained in an MMD ball of radius $\epsilon$ around the  empirical CME $\hat{\mu}$ with probability $(1-\zeta)$, where 
    \begin{align*}
        \epsilon &\leq \sqrt{\frac{C}{N}} + \sqrt{\frac{2\cdot C\cdot \mathtt{log}(\frac{1}{\zeta})}{N}}
         = \sqrt{\frac{C}{N}} \left(1 + \sqrt{2\cdot \mathtt{log}(\frac{1}{\zeta})}\right)~.
    \end{align*}
    Rearranging the terms in the above inequality gives the reported bound for $\zeta$.
\end{proof}

\section{Benchmarks}
\label{sec-bench}

\smallskip
\noindent
\textbf{Pendulum.}
For the pendulum example, the safety specification requires that the pendulum never goes for a full swing to achive its goal (cf. Section~\ref{sec-pend}).

\smallskip
\noindent
\textbf{MountainCarContinuous.}
In Mountain Car example, position $p \in [-1.2,0.6]$, the safety constraint requires that the car does not go to the extreme left while gaining momentum to climb the hill, i.e. $p>-1.0$. 
Thus, the safety specification enforces a reduction on the maximum momentum that the car could use to reach the goal. 


\smallskip
\noindent
\textbf{Inverted Pendulum.}
For the inverted pendulum, the safety specification requires that the inverted pendulum never goes too far in the horizontal direction, i.e. outside [-0.3,0.3]. 

\smallskip
\noindent
\textbf{Hopper.}
For Hopper environment, the safety specification requires that the height of its center of mass should always be greater than $0.8$ meters and its pedal angle should always be in range $[-0.2,0.2]$ radians.
We construct a multi-dimensional safety specification defined as $\{z>0.8 \land |a|<0.2\}$.

\smallskip
\noindent
\textbf{Walker.}
For Walker environment, the safety specification requires that the height of its center of mass should always be greater than $0.9$ meters and its pedal angle should always be in range $[-1.0,1.0]$ radians.
We construct a multi-dimensional safety barrier defined as $\{z>0.9 \land |a|<1.0\}$.

\smallskip
\noindent
\textbf{Ant.}
The Ant environment belongs to the quadruped class with the safety specification requiring that the height of its center of mass should always be greater than $0.25$ meters above the ground. This is challenging because the ant can achieve higher speed by bending its legs more (unsafe state), resulting in lower height of center of mass.

\smallskip
\noindent
\textbf{Humanoid.}
The humanoid is the most complex benchmark used in our experiments. The safety specification requires that the height of the center of mass is always above $1.1$ meters above the ground.

\section{Additional Experiments}

\subsection{Visualizing synthesized barriers}
\label{sec-viz-barr}
Figure~\ref{fig:barr} shows an intermediate barrier synthesized during the later stage of exploration via the KBSE algorithm.
The barrier is one-dimensional ($\vartheta$) for the Pendulum and two-dimensional ($z$, $a$) for the Hopper environment. The regions in red and blue correspond to large and small values for the barrier function, respectively. The yellow curve represents the contour for $B(s_t)=\nu$. In case of Pendulum (left), the contour is a line around $\vartheta=-0.8$ which corresponds to the boundary of the safety specification. This indicates that the barrier function successfully captures the safety for the Pendulum environment. Similar argument can be made for the Hopper environment (right).   
%

\begin{figure}[!ht]
    \centering
    \includegraphics[width=0.4\textwidth]{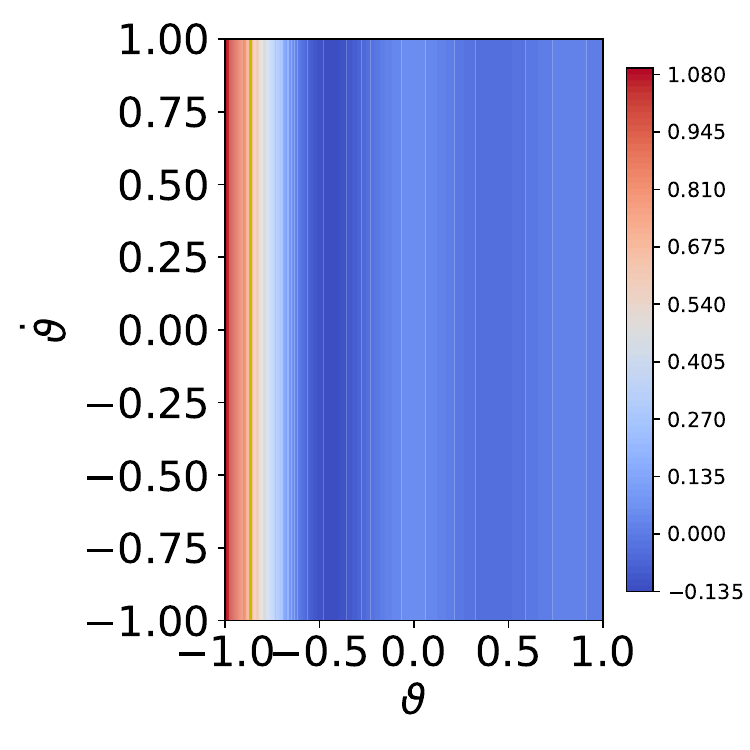}
    \includegraphics[width=0.38\textwidth]{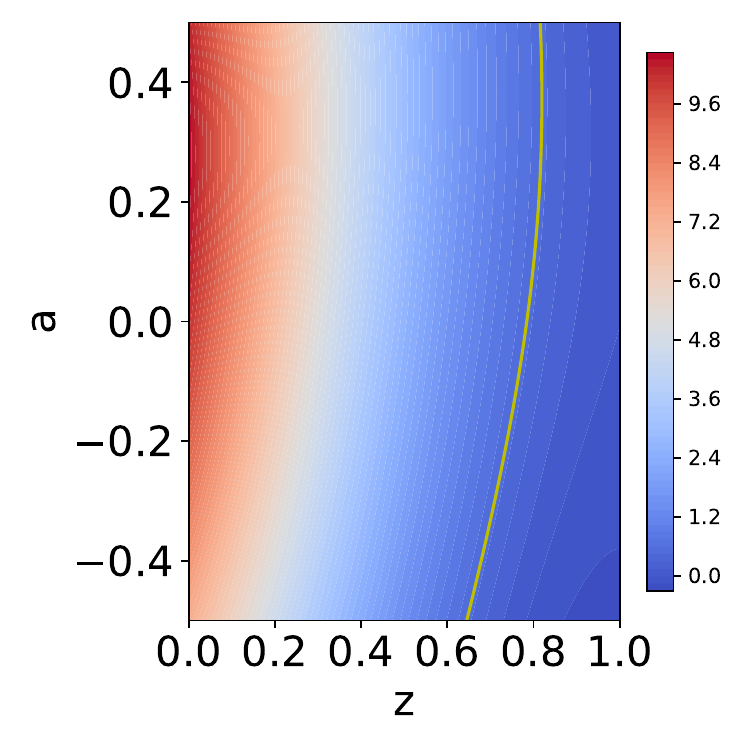}
    \caption{Plots showing an intermediate barrier synthesized for \textsf{SafetyPendulum-v1} (left) and \textsf{SafetyHopper-v1} (right) during the exploration phase for the KBSE algorithm..}
    \label{fig:barr}
\end{figure}

\subsection{Plots for $\epsilon$ and $\bar{B}$}
\label{sec-plot-ep}
Figure~\ref{fig:epub} shows the values of $\epsilon$ and $\bar{B}$ during the exploration via the KBSE algorithm for the Hopper environment. We observe that $\epsilon$ and $\bar{B}$ decrease with time, leading to a reduction in the approximation error introduced due to the  empirical CME.

\begin{figure}[!ht]
    \centering
    \includegraphics[width=0.45\textwidth]{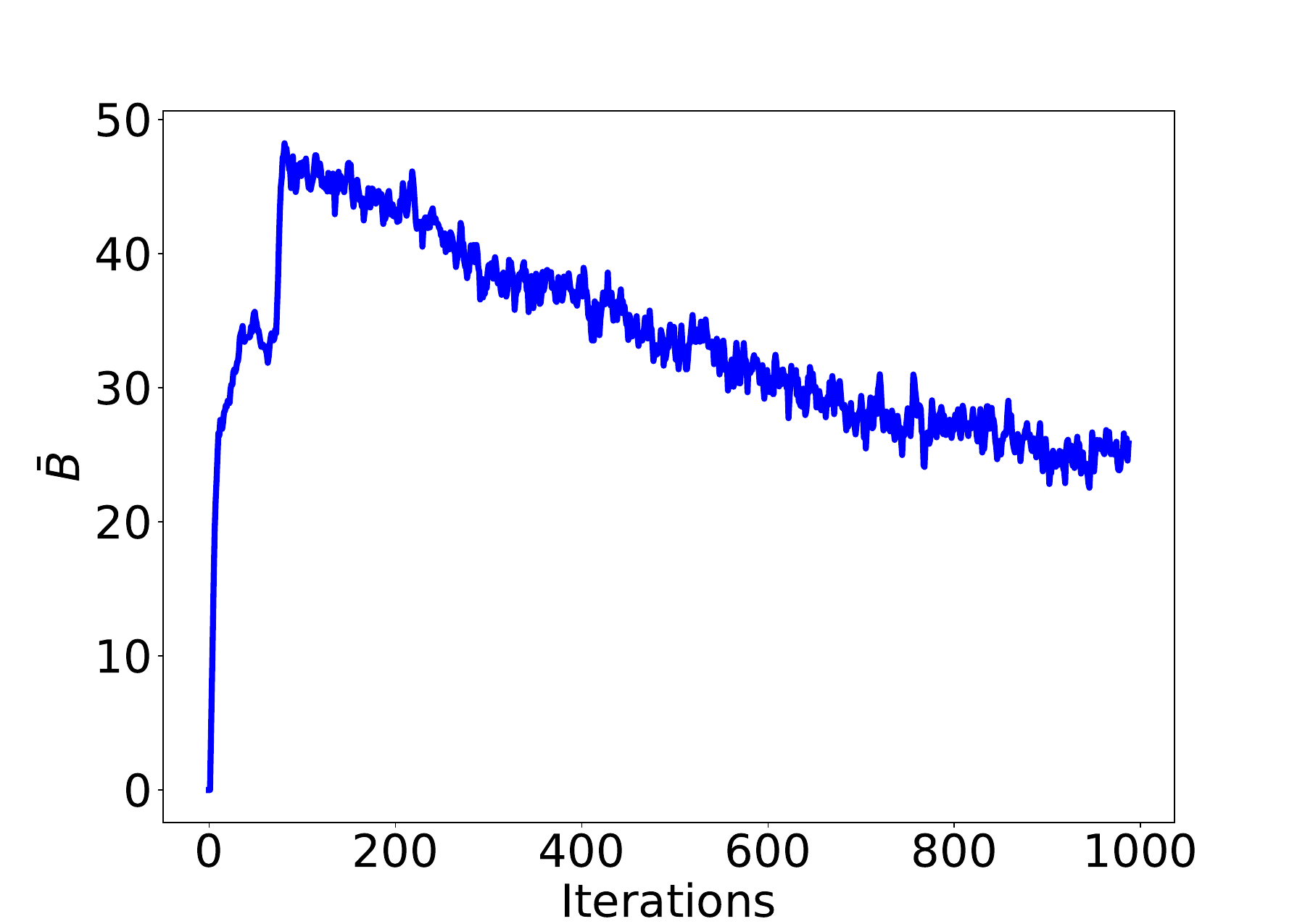}
    \includegraphics[width=0.45\textwidth]{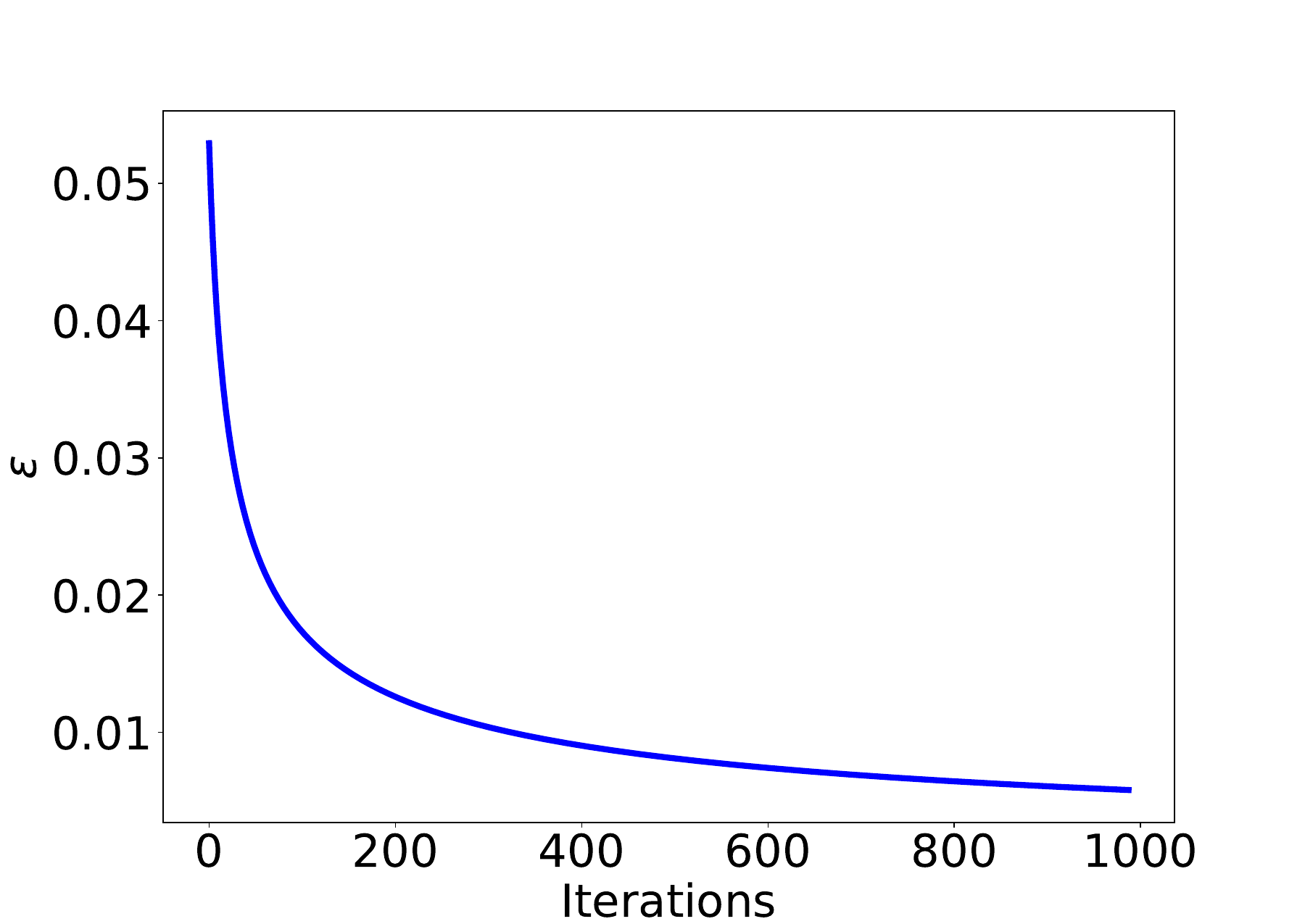}
    \caption{Values of $\bar{B}$ in the left and $\epsilon$ in the right during the exploration phase for the Hopper environment.}
    \label{fig:epub}
\end{figure}

\subsection{Training Time and Safety Violations}
\label{sec-cost-viol}
The training time depends on the number of safety violations encountered during exploration. Hence, relaxing the safety specification reduces training time. To demonstrate this, Table~\ref{table-scost-viol} reports the training time and the number of safety violations for different safety specifications in the \textsf{SafetyHopper} benchmark.

\begin{table}[!ht]
  \begin{center}
  {\fontsize{9}{6}\selectfont
  \begin{tabular}{|l | r | r |}
    \toprule
    Safety Specification  & Training Time  & Safety Violations\\
    \midrule
     $z>0.8 \land |a|<0.2$ &    $20516.26$ &  $3745$        \\
       $z>0.75 \land |a|<0.2$ &   $17901.66$ &     $2155$ \\
       $z>0.70 \land |a|<0.2$ & $14107.36$ & $0$ \\
     \bottomrule
  \end{tabular}
  }
  \end{center} 
  \caption{Comparison of the training time and the number of safety violations for the \textsf{SafetyHopper} benchmark. 
  }
  \label{table-scost-viol}
\end{table}





\section{Hyper-parameters for KBSE Algorithm}
\label{sec-kbse-param}
In our experiments, we have used the same configuration as used by the OMNISAFE framework~\citep{omnisafe}.
We have used $H=500$ in \eqref{eq:regr} to learn the local linear dynamics in order to find the closest safe action. The value of $T$ is equal to the episode length for the corresponding environment. The values for Epoch and Barrier Size ($N$) is shown in Appendix~\ref{sec-kbse-param}. For kernel bandwidth, we followed the same approach as described in~\cite{Duvenaud}. The selected values for the hyper-parameters provide a good balance between accuracy and computational cost.

 The term  $\epsilon\cdot \sqrt{k_{SA} ((s,a),(s,a))}\cdot \bar{B}$ in Equations~\eqref{bar_eq}--\eqref{final-eq} represents the conservatism introduced due to the worst-case analysis with respect to the unknown dynamics $\mathcal T$. 
The function $\mathtt{update\_upper\_bound}$ in Algorithm~\ref{algo1} computes the upper bound of $B$. This is given by the RKHS norm of $B$, i.e., $\| B \|_{\mathcal{H}_{k_S}}=\sqrt{\alpha^T \cdot K_{\hat{S}\hat{A}} \cdot \alpha}$, where $\alpha =  [K_{\hat{S}\hat{A}} +\lambda N \mathbb{I}_N]^{-1}\cdot[B(s_i)]_{i=1}^{N} $ \citep{Romao23}. 
%
%
For the KBSE algorithm, we have used the RBF kernel with $C=1$ and assume $\zeta=10^{-5}$, which means
the function $\mathtt{update\_epsilon}$ in Algorithm~\ref{algo1}, updates the MMD radius $\epsilon$ at the rate of $\sqrt{\frac{1}{N}} \left(1 + \sqrt{10}\right) $. 



For KBSE, the values of the hyper-parameters \texttt{Epoch}, \texttt{Barrier Size} ($N$), and the total number of training time-steps are shown in Table~\ref{table-kbse-param}.

\begin{table}[!ht]
  \begin{center}

  {\fontsize{9}{6}\selectfont
  \begin{tabular}{|l | r | r |r|}
    \toprule
   
    Env  & Epoch &  $N$ & Training steps \\
    \midrule
   \textsf{SafetyPendulum}  & $10000$ & $500$ & $50000$ \\
    \midrule
   \textsf{SafetyMountainCar} & $10000$ & $500$  & $50000$ \\
   \midrule
   \textsf{SafetyInvPendulum} & $10000$ & $2000$ & $200000$\\
   \midrule
   \textsf{SafetyHopper} & $10000$ & $2000$ & $600000$\\
   \midrule
   \textsf{SafetyWalker2d} & $10000$ & $2000$ & $600000$\\
   \midrule
   \textsf{SafetyAnt} & $10000$ & $2000$ & $600000$\\
   \midrule
   \textsf{SafetyHumanoid} & $50000$ & $2000$ & $1000000$\\
     \bottomrule
  \end{tabular}
  }
  \caption{Hyper-parameters used in the KBSE algorithm.}
  \label{table-kbse-param}
  \end{center}
\end{table}

\section{Additional Discussions}

\noindent
\textbf{Limitations.}
A general limitation of kernel-based methods is scalability and high memory requirements. 
Although the computational requirements of our algorithm increase as the dimension of state and action space increase, the increase in memory requirement is negligible. 
This is because the number of data samples used to construct the barrier function online is fixed (the sample set size grows with more data, thereby improving the barrier estimate). The computational overhead for KBSE comes from the quadratic optimization problem for $\texttt{get\_safe\_action}$. While our approach can handle benchmarks similar to \textsf{SafetyHumanoid} that has hundreds of states (cf. Table~\ref{table-bench}), scalability can be further improved by employing sparse CME (e.g., using the work by~\cite{Cortes17}), which we plan to work on in the future. 

\smallskip
\noindent
\textbf{Safety-performance tradeoff.}
Under the proposed KBSE algorithm, exploration can be guided away from highly rewarding but unsafe areas of the state space towards safer regions. The confidence level parameter $\zeta$ in the barrier formulation can be utilized to balance the trade-off between safety and performance.

\smallskip
\noindent
\textbf{Implications of Assumption in Remark~\ref{rem:G}.}
This is an assumption required for the theory of CME to hold and can be verified by having additional information on the class of system dynamics and the distribution of the noise. Under this additional information, one would need to check that the CME of the system dynamics belongs to the chosen Hilbert space. This restricts the conditional distribution to be ``regular enough'' so that it does not concentrate mass in a pathological way or have heavy tails. We refer to \cite{Park20} for the technical details and remark that the requirements such as bounded kernel moments and RKHS norm integrability hold when selecting the kernel to be Gaussian or Radial Basis Functions (RBFs), as employed in our experiments.

\smallskip
\noindent
\textbf{Safety violation as indicator of safe exploration.}
Our problem formulation allows violation of safety up to a certain probability threshold even for the policy learned at the end of the exploration phase. We use safe exploration in the sense of satisfying the constraints characterizing the barrier function. The Safety violations during training occur due to the constraint optimization problem being infeasible. However, the training under the KBSE algorithm ensures that the collected data samples satisfy the constraints, and thus the intermediate policies learned using these data samples would gradually satisfy these constraints.

\smallskip
\noindent
\textbf{Online vs. Off-policy.}
In this paper, the term ``online'' refers to the computation of the barrier function which is synthesized on-the-fly, i.e., during exploration in RL. The ``off-policy'' term refers to the RL based learning paradigm where policy is updated using the older (off-policy) data collected during exploration since the beginning.



\smallskip
\noindent
\textbf{Bias-Variance decomposition of error rates.}
As per Theorem~\ref{thm-conv}, the bias decreases as the number of samples increases. The i.i.d. generation of data ensures that variance is bounded. A large value of the RBF kernel bandwidth makes the model less complex which may reduce variance but increase bias. 

\smallskip
\noindent
\textbf{Assumption on bounds of the RKHS norm of B.}
The assumption of a bounded RKHS norm for the barrier function is a standard regularity condition commonly used in kernel-based learning methods to derive finite-sample generalization bounds ~\citep{Li22,Hou23}, and should be interpreted as such rather than as a claim that an arbitrarily complex safety boundary can always be captured.
If the chosen kernel is insufficiently expressive, the theoretical safety guarantee holds with respect to the best approximation of the true dynamics and barrier within the RKHS, reflecting the standard trade-off between model expressiveness and statistical guarantees.

\smallskip
\noindent
\textbf{Solution to Problem~\ref{prob-def}.}
 KBSE does not solve Problem~\ref{prob-def} in the sense of finding the optimal policy. Instead, it produces a policy $\pi_{\theta}$ accompanied by a barrier $B$ such that Theorem~\ref{thm:safe-prob} certifies $\mathbb{P}
    \{\mbox{some state is in } S_u\}\leq \delta $ with confidence $1-\zeta$, where $\delta=\frac{\eta + cT}{\nu}$. 
    The optimality of   $\pi_{\theta}$ with respect to the reward objective under this safety constraint is not guaranteed; nevertheless, the learned policy is probabilistically safe with confidence $1-\zeta$ and accumulates competitive empirical reward across all benchmarks (Table~\ref{table-full}). 


\smallskip
\noindent
\textbf{Motivation for selecting the RBF kernel.}
We choose the RBF kernel for three reasons. First, it is a universal kernel on compact domains, ensuring that any continuous barrier function can be approximated within $\mathcal{H}_{k_{S}}$ to arbitrary precision. Second, it is a characteristic kernel, which guarantees that the MMD is a proper metric on distributions and that the ambiguity set $\mathcal{C}_{\epsilon}$ in Eq.~\eqref{eq:amb} correctly captures uncertainty about the true CME. Third, it satisfies $\sup_s k_{S}(s,s) =1$, which gives the tightest possible constant $C$ in the convergence bound of Theorem~\ref{thm-conv}, directly controlling the quality of the safety certificate.
\end{document}